\documentclass{article}

\usepackage{microtype}
\usepackage{graphicx}
\usepackage{subcaption}
\usepackage{booktabs} 
\usepackage{multirow}
\usepackage{makecell}
\usepackage{hyperref}

\usepackage[accepted]{icml2026}

\usepackage{amsmath}
\usepackage{amssymb}
\usepackage{mathtools}
\usepackage{amsthm}

\usepackage[capitalize,noabbrev]{cleveref}

\theoremstyle{plain}
\newtheorem{theorem}{Theorem}[section]
\newtheorem{proposition}[theorem]{Proposition}
\newtheorem{lemma}[theorem]{Lemma}

\theoremstyle{definition}

\theoremstyle{remark}

\usepackage[textsize=tiny]{todonotes}

\icmltitlerunning{AMDP: Asynchronous Multi-Directional Pipeline Parallelism}

\begin{document}

\twocolumn[
  \icmltitle{AMDP: Asynchronous Multi-Directional Pipeline Parallelism \\for Large-Scale Models Training}



  \icmlsetsymbol{equal}{*}

  \begin{icmlauthorlist}
    \icmlauthor{Ling Chen}{yyy,sch}
    \icmlauthor{Houming Wu}{yyy,sch}
    \icmlauthor{Wenjie Yu}{yyy,sch}
  \end{icmlauthorlist}

  \icmlaffiliation{yyy}{State Key Laboratory of Blockchain and Data Security, Zhejiang University, Hangzhou, China}
  \icmlaffiliation{sch}{College of Computer Science and Technology, Zhejiang University, Hangzhou, China}

  \icmlcorrespondingauthor{Ling Chen}{lingchen@cs.zju.edu.cn}

  \icmlkeywords{Machine Learning,  Asynchronous Pipeline Parallelism, ICML}

  \vskip 0.3in
]



\printAffiliationsAndNotice{}  

\begin{abstract}
Pipeline parallelism is essential for large-scale model training, but existing asynchronous approaches often degrade convergence due to parameter mismatch between forward and backward passes. We propose \underline{\textbf{A}}synchronous \underline{\textbf{M}}ulti-\underline{\textbf{D}}irectional \underline{\textbf{P}}ipeline parallelism (\textbf{AMDP}) to mitigate this issue while sustaining high utilization. AMDP limits the first stage of each pipeline to process at most two minibatches before backpropagation, bounding the number of parameter updates between forward and backward passes. To alleviate the resulting pipeline bubbles, AMDP  launches multiple concurrent pipelines and adapts their number according to pipeline depth. In addition, AMDP accumulates gradients across minibatches and applies them in a single update, ensuring that only a bounded number of minibatches experience parameter mismatch, limited to within one optimization step. Experiments on GPT- and BERT-style models demonstrate that AMDP significantly accelerates training while preserving convergence.
\end{abstract}

\section{Introduction}
\label{introduction}

\begin{figure}[ht]
	\centering
	\begin{subfigure}[t]{0.9\columnwidth}
		\includegraphics[width=\columnwidth]{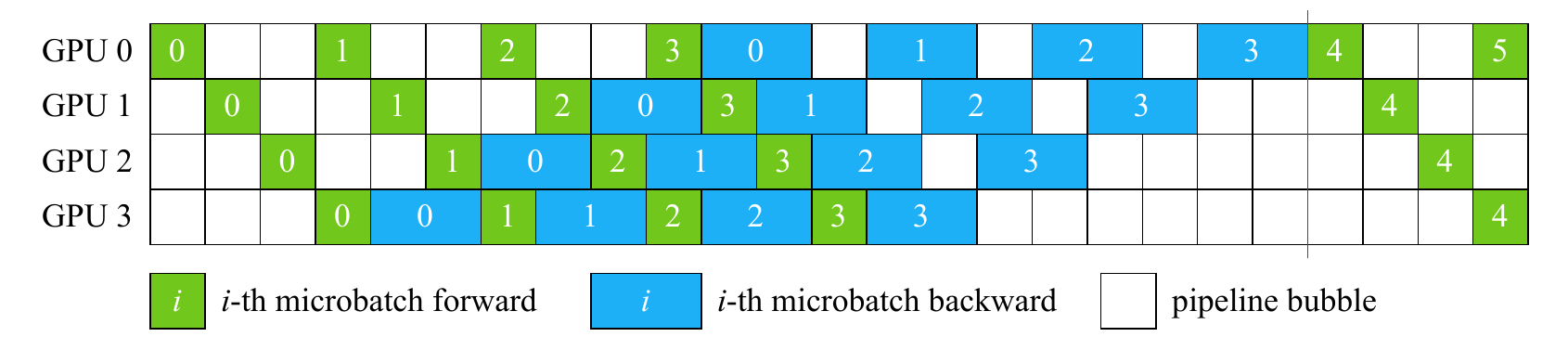}
		\caption{Synchronous pipeline parallelism}
		\label{fig-syncpp}
	\end{subfigure}
	\begin{subfigure}[t]{0.9\columnwidth}
		\includegraphics[width=\columnwidth]{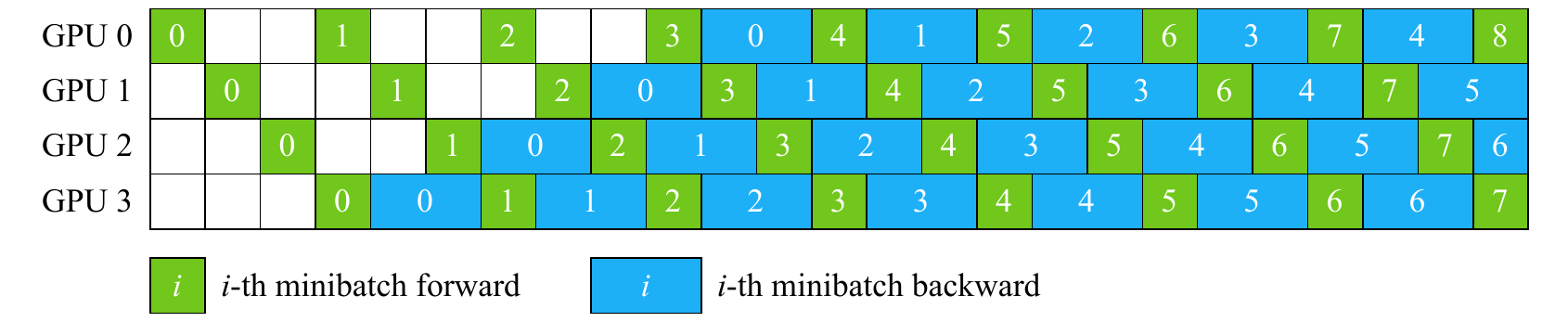}
		\caption{Asynchronous pipeline parallelism}
		\label{fig-asyncpp}
	\end{subfigure}
	\caption{Comparison of synchronous and asynchronous pipeline parallelism, with four model stages assigned to four devices. The backward pass duration is assumed to be twice that of the forward pass.}
	\label{fig-pp}
\end{figure}

Since the introduction of Transformers \citep{transformer}, scaling model size has become a primary driver of progress in deep learning, enabling remarkable advances across diverse applications. Larger models consistently deliver stronger performance, but their training poses formidable challenges. For example, GPT-3 \citep{gpt3} contains 175B parameters, requiring about 350 GB of memory in 16-bit precision. Even disregarding memory constraints, training GPT-3 on a single NVIDIA V100 GPU would take an estimated 288 years \citep{interleaved}. These prohibitive costs necessitate distributed parallel training across multi-GPU clusters.

Pipeline parallelism \citep{overview} has emerged as a key strategy for distributed parallel training. By partitioning a model into sequential stages, each executed on a different GPU, pipeline parallelism reduces per-device memory requirements and increases throughput. Depending on how parameters are updated, pipeline parallelism can be classified into synchronous and asynchronous schemes. In synchronous pipeline parallelism \citep{gpipe}, a minibatch is split into microbatches. Gradients from all microbatches are accumulated and applied only after all backward passes complete, and the next iteration begins once all parameters are updated. As shown in Figure \ref{fig-syncpp}, this scheme is straightforward but leads to pipeline bubbles—idle periods caused by inter-stage dependencies. 

Asynchronous pipeline parallelism \citep{pipedream} continuously feeds minibatches into the pipeline, updating parameters immediately after each backward pass. As shown in Figure \ref{fig-asyncpp}, this design eliminates bubbles once steady state is reached, but sacrifices convergence because parameter updates may occur between the forward and backward passes of a minibatch, resulting in parameter mismatch. To mitigate this mismatch, two techniques have been introduced: parameter stash and parameter prediction. Parameter stash \citep{pipedream} preserves the parameters used in the forward pass and reuses them in the corresponding backward pass to enforce consistency. Parameter prediction \citep{spectrain, ajanthannesterov} instead forecasts future parameter values using momentum-based optimizers \citep{adam}, thereby reducing mismatch by aligning computations with predicted parameters.

Existing asynchronous pipeline parallelism approaches predominantly adopt a 1F1B schedule, where each device alternates between one forward and one backward passes. To fully eliminate pipeline bubbles, these approaches continuously read minibatches into the pipeline. However, this design causes the severity of parameter mismatch to increase with pipeline depth, ultimately degrading convergence. Parameter stash mitigates mismatch but introduces delayed gradients, while parameter prediction avoids delay but relies on simplified forecasting that often deviates from true parameter values.

To address the aforementioned issues, we propose \underline{\textbf{A}}synchronous \underline{\textbf{M}}ulti-\underline{\textbf{D}}irectional \underline{\textbf{P}}ipeline Parallelism (\textbf{AMDP}). AMDP builds upon the bidirectional scheduling concept introduced in Chimera \citep{chimera} for synchronous training, but extends it to a fully asynchronous and multi-directional setting with three components absent in Chimera: 1) a structural mismatch-control mechanism that bounds forward–backward inconsistency to one step; 2) a gradient-accumulation update scheme that confines mismatch effects to minibatches within a window sized by the pipeline depth; and 3) a zero-redundancy optimizer (ZeRO) that efficiently manages the memory overhead of concurrent pipelines. The contributions of AMDP are summarized as follows:
\begin{itemize}
	\item We propose an asynchronous multi-directional pipeline schedule with a one-step mismatch bound, which limits stage~0 of each pipeline to read only two minibatches before initiating the first backward pass and instantiates multiple concurrent 1F1B pipelines with the optimal number determined analytically based on the pipeline depth.
	\item We introduce a gradient accumulation update strategy that aggregates gradients across multiple minibatches and applies them in a single update. This reduces communication frequency and ensures that parameter mismatch is bounded to at most one step, affecting only a number of minibatches equal to the pipeline depth.
	\item We incorporate ZeRO into the multi-directional schedule, ensuring that only one replica of each stage maintains the full optimizer state. This eliminates the redundant memory cost of naive pipeline replication while preserving efficiency. 
	\item Extensive experiments demonstrate that AMDP achieves up to 17\% higher throughput compared to state-of-the-art baselines, with marginal additional memory overhead, significantly accelerating large-scale language model training.
\end{itemize}

\section{Related Work}
\label{related_work}

\textbf{Synchronous Pipeline Parallelism.} Synchronous pipeline methods launch a new minibatch only after completing the
previous update. GPipe \citep{gpipe} firstly proposes microbatch-based pipeline training with an ``all-forward-then-all-backward’’ schedule, achieving stable convergence but incurring high activation memory overhead. DAPPLE \citep{dapple} reduces memory cost by adopting a 1F1B schedule that releases activations earlier, and Interleaved 1F1B \citep{interleaved} further decreases bubble size via finer stage partitioning. Chimera \citep{chimera} and BitPipe \citep{wu2024bitpipe} improve utilization by running two counter-directed pipelines whose idle periods overlap, and zero-bubble pipeline parallelism (ZB-V) \citep{zerobubble} theoretically eliminates bubbles by splitting backward computation. Recent weight-passing approaches (e.g. WeiPipe \cite{lin2025weipipe} and TawPipe \cite{wu2026tawpipe}) reduce communication volume by transmitting model weights instead of activations, which can substantially improve bandwidth efficiency in long-context large models training. Despite these advances, synchronous schemes remain fundamentally constrained by sequential dependencies, leading to persistent bubbles and limited throughput.

\begin{table*}[tb]
	\centering
	\caption{Comparison of representative pipeline parallelism approaches. $d$ denotes the pipeline depth, $n$ denotes the number of microbatches per minibatch, and $M_\theta$/$M_a$ indicate the weight/activation memory footprint of a single stage.}
	\label{tab:motivation}
	{\small
		\begin{tabular}{lccccc}
			\toprule
			Approach & \makecell[c]{Bubble \\ Ratio} & \makecell[c]{Weight \\Memory} & \makecell[c]{Peak Activation \\Memory}  &\makecell[c]{Extra Mem., \\Comp., and Comm.}& Convergence\\
			\midrule
			DAPPLE  & $(d-1)/(n+d-1)$ & $M_\theta $ &  $n\cdot  M_a$  &[$\times, \times, \times$]& Stable\\
			Inter-1F1B  & $(d-1)/(2n+d-1)$ & $M_\theta $ & $d\cdot M_a$   &[$\checkmark, \times, \checkmark$]& Stable\\
			Chimera & $(d-2)/(2n + d-2)$ & $2M_\theta $ & $d\cdot M_a$  &[$\checkmark, \times, \checkmark$]& Stable\\
			ZB-V & $\approx 0\%$ & $M_\theta $ & $(2d-1)\cdot M_a$  &[$\times, \checkmark, \times$]& Stable\\
			PipeDream & $\approx 0\%$ & $[M_\theta, d\cdot M_\theta] $ & $d\cdot M_a$  &[$\checkmark, \times, \times$]& Unstable\\
			PipeDream-2BW & $\approx 0\%$ & $2M_\theta $ & $d\cdot M_a$  &[$\checkmark, \times, \times$]& Moderate\\
			XPipe & $\approx 0\%$ & $M_\theta $ & $d\cdot M_a$  &[$\checkmark, \checkmark, \times$]& Moderate\\
			vNAG & $\approx 0\%$ & $[M_\theta, d\cdot M_\theta] $ & $d\cdot M_a$  &[$\checkmark, \times, \times$]& Moderate\\
			\textbf{AMDP (ours)} & $\approx 0\%$ & $M_\theta $ & $d\cdot M_a$ &[$\checkmark, \times, \checkmark$] & Near-stable\\
			\bottomrule
		\end{tabular}
	}
\end{table*}

\textbf{Asynchronous Pipeline Parallelism.} Asynchronous methods remove bubbles by updating parameters immediately after
each backward pass, at the cost of parameter mismatch. PipeDream \citep{pipedream} addresses this via parameter stashing, caching forward-pass weights for reuse during backpropagation; PipeDream-2BW \citep{pipedream-2bw}
reduces stash memory by maintaining two parameter versions per device. These approaches avoid mismatch but introduce delayed gradients. Parameter prediction offers an alternative: SpecTrain \citep{spectrain} predicts future parameters
via momentum states, and XPipe \citep{xpipe} improves accuracy using Adam-based prediction. Recent work vNAG \citep{ajanthannesterov} applies momentum-based extrapolation to alleviate staleness in PipeDream-like pipelines. Prediction-based methods mitigate delay but remain limited by the simplicity of their forecasting rules, which can produce large deviations between predicted and actual weights.

Table~\ref{tab:motivation} summarizes the trade-offs of representative pipeline approaches in terms of bubble ratio, memory overhead, convergence stability, and additional costs in memory, computation, and communication. Synchronous approaches (e.g., DAPPLE, Inter-1F1B, Chimera, and ZB-V) provide stable convergence but incur either high bubble ratios or substantial memory usage. Asynchronous approaches (e.g., PipeDream, PipeDream-2BW, XPipe, and vNAG) eliminate bubbles but suffer from elevated memory requirements and unstable convergence due to parameter mismatch.

\textbf{Optimization and Memory Techniques.} Gradient accumulation \citep{hermans2017accumulated, ddp} is widely used in large-scale model training, particularly under pipeline parallelism, where gradients from multiple microbatches are aggregated into a single update. This enables large-batch training with manageable per-device memory, though under synchronous schedules it can exacerbate pipeline bubbles and reduce efficiency. While asynchronous PipeDream-style methods may increase the update interval to slow down parameter changes, this strategy does not impose any structural bound on forward--backward mismatch, which typically still grows with pipeline depth. 

Memory-reduction methods such as ZeRO \citep{zero,zhao2023pytorch} partition optimizer states, gradients, and parameters across devices, substantially lowering memory overhead while preserving correctness. AMDP leverages these techniques in an asynchronous, multi-directional setting: gradient accumulation suppresses mismatch within each depth-sized window, and ZeRO eliminates redundant optimizer-state replication across pipelines. Together, these components allow AMDP to achieve both high throughput and near-stable convergence.

\section{Methodology}
\label{methodology}
As highlighted in Table~\ref{tab:motivation}, the limitations of existing pipeline parallelism approaches motivate the design of AMDP, an asynchronous multi-directional pipeline parallelism framework that addresses efficiency, memory usage, and convergence stability. AMDP comprises four interrelated components: (1) a \textit{parameter mismatch control strategy} that structurally limits inconsistencies between forward and backward passes, (2) a \textit{multi-directional scheduling scheme} that  maximizes hardware utilization, (3) a \textit{gradient accumulation update strategy} that reduces communication overhead and bounds parameter mismatch, and (4) a \textit{zero-redundancy optimizer} that minimizes redundant memory usage across pipelines.

\subsection{Parameter Mismatch Control}\label{sec3.1}

In asynchronous pipeline parallelism, a minibatch always performs its forward pass at a stage before receiving the corresponding backward pass. As a result, the \emph{parameter mismatch} at a given stage (i.e., the number of parameter updates occurring between a minibatch's forward and backward passes) equals the number of minibatches that have completed forward passes but have not yet started backward passes when the current minibatch enters. In a steady 1F1B schedule, each stage issues a new forward only after completing the previous backward, so the mismatch is exactly ``the number of minibatches read before the first backward minus one.''

This quantity is determined by two structural limits. First, to maintain bubble-free execution during backward propagation, each stage must execute one additional forward relative to its successor before observing its first backward. Hence stage $i$ can read at most $d-i$ minibatches in this warm-up phase, where $d$ is the pipeline depth. Second, the number of minibatches read at any stage cannot exceed the number read by stage~0, denoted by $n$. Combining both constraints yields the following structural bound:

\begin{lemma}[Mismatch Bound]
	For any stage $i\in\{0,\dots,d-1\}$, the parameter mismatch satisfies: 
	\begin{equation} \mathrm{mismatch}(i) = \min(n, d-i) - 1.
	\end{equation}		
\end{lemma}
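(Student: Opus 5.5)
The plan is to make the informal reasoning preceding the statement rigorous by modeling one pipeline as a 1F1B schedule. I will write $w_i$ for the number of forward passes stage $i$ executes before its first backward pass, i.e.\ its warm-up count. The argument has three steps: reduce mismatch to warm-up occupancy, compute $w_i$ from two constraints, and conclude.

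\textbf{Step 1: mismatch equals steady-state occupancy.} I will argue that in steady 1F1B each stage alternates F and B. So when a minibatch $m$ runs its forward pass at stage $i$, the in-flight set (forwarded but not yet backwarded) has a fixed size $w_i$, counting $m$ itself. Parameter updates happen once per backward. The backwards executed at stage $i$ between $F_i(m)$ and $B_i(m)$ are exactly those of the $w_i-1$ minibatches that were in flight ahead of $m$. Each triggers one update, so $\mathrm{mismatch}(i)=w_i-1$. The key invariant is that each steady-state step is one B followed by one F, which keeps occupancy constant. I will prove this by induction on steady-state steps.

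\textbf{Step 2: determine $w_i$.} I will establish an upper bound and a matching lower bound.
\begin{itemize}
\item \emph{Causality bound, $w_i\le d-i$.} The first backward at stage $i$ can start only after minibatch $0$ has gone forward through stages $i,\dots,d-1$ and backward through stages $d-1,\dots,i+1$.
\item \emph{Bubble-freedom.} Stage $i$ must stay busy during that window. This forces $w_i=w_{i+1}+1$ when unconstrained, with $w_{d-1}=1$, which gives $w_i=d-i$.
\item \emph{Supply bound, $w_i\le n$.} Stage $0$ reads only $n$ minibatches before its first backward. No stage can have forwarded more minibatches than stage $0$ has injected before backwards begin to flow, so $w_i\le w_0\le n$.
\item \emph{Greedy lower bound.} The schedule fills each stage as far as both limits allow, so $w_i\ge\min(n,d-i)$.
\end{itemize}
Together these give $w_i=\min(n,d-i)$.

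\textbf{Step 3: conclude.} Combining the two steps gives $\mathrm{mismatch}(i)=\min(n,d-i)-1$.

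\textbf{Expected obstacle.} The hard part is justifying equality in Step 2 rather than just an upper bound, in the regime $n<d-i$. If stage $0$ stalls after $n$ reads, downstream stages might in principle receive fewer than $n$ minibatches before their first backward. The first thing I will try is a direct check: with $n<d-i$, all $n$ minibatches injected by stage $0$ reach stage $i$ before any backward returns. The first backward to stage $i$ needs $2(d-1-i)$ hops, which is at least the $n-1$ additional forward slots needed. Stage $i$ therefore sees exactly $n$ forwards first, and the occupancy argument of Step 1 then holds in that steady state.
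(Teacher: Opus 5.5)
Your proposal follows the paper's argument: the mismatch is the in-flight (warm-up) count minus one, and that count is capped by the injection limit $n$ and the 1F1B propagation limit $d-i$, combined through a minimum. Your greedy lower-bound step goes slightly beyond the paper, whose appendix proof only establishes $\mathrm{mismatch}(i)\le\min(n,d-i)-1$ and then writes equality. One caution: justify $w_i\le d-i$ by the 1F1B warm-up rule (one extra forward per downstream stage, as the paper does), not by causality. The first backward's $2(d-1-i)$-hop delay only shows the window is long enough to hold those forwards, and on its own it would allow more of them.
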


This simple expression explains the instability of existing asynchronous pipelines. When prior work sets $n=d$ to eliminate bubbles, the mismatch becomes $\mathrm{mismatch}(i) = d-i-1$, which grows linearly with depth and results in increasingly stale gradients.

\begin{figure}[ht]
	\centering
	\begin{subfigure}[t]{0.9\columnwidth}
		\includegraphics[width=\columnwidth]{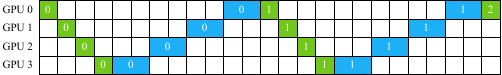}
		\caption{Read 1 minibatch}
		\label{fig-mismatcha}
	\end{subfigure}
	\begin{subfigure}[t]{0.9\columnwidth}
		\includegraphics[width=\columnwidth]{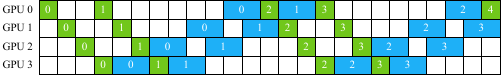}
		\caption{Read 2 minibatches}
		\label{fig-mismatchb}
	\end{subfigure}
	\begin{subfigure}[t]{0.9\columnwidth}
		\includegraphics[width=\columnwidth]{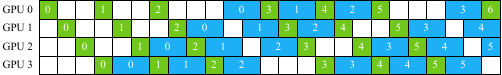}
		\caption{Read 3 minibatches}
		\label{fig-mismatchc}
	\end{subfigure}
	\caption{Parameter mismatch at stage 0 increases linearly with the number of minibatches read before the first backward pass. The case with the number set to 4 is shown in Figure \ref{fig-asyncpp}.}
	\label{fig-mismatch}
\end{figure}

Figure~\ref{fig-mismatch} and Figure~\ref{fig-asyncpp} illustrate this effect. With $n=1$, mismatch is zero; with $n=2$, mismatch becomes one step (e.g., a single update from minibatch~2 occurs between the forward and backward of minibatch~3 at stage~0). As $n$ increases, mismatch grows proportionally.

To mitigate mismatch, one would ideally minimize $n$. However, using $n=1$ requires launching many parallel pipelines to avoid large bubbles, substantially increasing memory usage. Instead, AMDP enforces $n=2$ by design. From Lemma 1, this guarantees $\mathrm{mismatch}(i)\le 1 \quad \text{for all stages } i$, regardless of pipeline depth, multi-node deployment, or multi-directional stage placement (details provided in the Appendix \ref{appendix-B}). A fixed one-step mismatch has important convergence implications:

\begin{theorem}[Near-Synchronous Convergence of AMDP]
	Let \(F\) be an \(L\)-smooth objective bounded below by \(F^\star\).
	Assume stochastic gradients satisfy
\(	\mathbb{E}[g_t \mid \theta_t]
		=
		\nabla F(\theta_t)\) and \(
		\qquad
		\mathbb{E}\!\left[
		\|g_t-\nabla F(\theta_t)\|^2
		\right]
		\le
		\sigma^2.\) Suppose AMDP enforces a strict one-step mismatch bound
	\(
	\tau(t)\le 1
	\)
	for all \(t\), and choose a stepsize satisfying
	\(
	\eta L \le 1
	\). Then the AMDP iterates satisfy
	\begin{align}
		\frac{1}{T}
		\sum_{t=0}^{T-1}
		\mathbb{E}
		\big[
		\|\nabla F(\theta_t)\|^2
		\big]
		\le
		\frac{
			2(F(\theta_0)-F^\star)
		}{
			\eta T
		}
		+
		\eta L\sigma^2
		+
		O(\eta^2),
	\end{align} preserving the convergence behavior of synchronous SGD
	up to a second-order perturbation.
\end{theorem}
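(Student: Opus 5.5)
The plan is to treat AMDP as delayed SGD with delay at most one and run the standard descent-lemma argument. The update is $\theta_{t+1}=\theta_t-\eta g_{t-\tau(t)}$ with $\tau(t)\in\{0,1\}$, where $g_{t-\tau(t)}$ is a stochastic gradient evaluated at $\theta_{t-\tau(t)}$. Lemma 1 with $n=2$ justifies $\tau(t)\le 1$, and the gradient-accumulation scheme confines stale contributions to one window. I will write the update as $\theta_{t+1}=\theta_t-\eta\big(\nabla F(\theta_t)+\xi_t+e_t\big)$. Here $\xi_t$ is the zero-mean noise with variance at most $\sigma^2$, and $e_t=\nabla F(\theta_{t-\tau(t)})-\nabla F(\theta_t)$ is the staleness error.

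The first step applies $L$-smoothness:
\[
F(\theta_{t+1})\le F(\theta_t)-\eta\langle\nabla F(\theta_t),\nabla F(\theta_t)+\xi_t+e_t\rangle+\tfrac{L\eta^2}{2}\|\nabla F(\theta_t)+\xi_t+e_t\|^2 .
\]
I take conditional expectations, using unbiasedness so the cross term with $\xi_t$ vanishes; for delayed gradients I condition on the sigma-algebra at the time the stale gradient was drawn. With $\eta L\le 1$, the quadratic term leaves $-\tfrac{\eta}{2}\|\nabla F(\theta_t)\|^2+\tfrac{L\eta^2}{2}\sigma^2$ plus terms in $e_t$.

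The second step bounds the staleness error. Since $\tau\le 1$, smoothness gives $\|e_t\|\le L\|\theta_t-\theta_{t-1}\|=L\eta\|g_{t-1-\tau(t-1)}\|$. Hence $\mathbb{E}\|e_t\|^2\le 2L^2\eta^2\big(\mathbb{E}\|\nabla F(\theta_{t-1})\|^2+\sigma^2\big)+O(\eta^4)$. The inner-product term $-\eta\langle\nabla F(\theta_t),e_t\rangle$ is handled with Young's inequality, $\le \tfrac{\eta}{4}\|\nabla F(\theta_t)\|^2+\eta\|e_t\|^2$. This makes the total staleness contribution $O(\eta^3)$ per step, which becomes $O(\eta^2)$ after dividing by $\eta$. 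Absorbing $\tfrac{\eta}{4}\|\nabla F\|^2$ would degrade the leading constant 2. To preserve it exactly, I would instead keep the term to second order and bound it directly as $\eta L\|\nabla F(\theta_t)\|\cdot\eta\|g\|$. Its expectation is $O(\eta^2)$ only if gradients are bounded along the trajectory, so I plan either to assume $\mathbb{E}\|\nabla F\|^2\le G^2$, implicit in the $O(\cdot)$, or to use the shifted-sequence trick and compare with $\theta_t$ at the cost of a higher-order term.

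The third step telescopes over $t=0,\dots,T-1$, uses $F(\theta_T)\ge F^\star$, and divides by $\eta T/2$. The result is
\[
\tfrac{1}{T}\sum_t\mathbb{E}\|\nabla F(\theta_t)\|^2\le \tfrac{2(F(\theta_0)-F^\star)}{\eta T}+\eta L\sigma^2+O(\eta^2).
\]
The $\mathbb{E}\|\nabla F(\theta_{t-1})\|^2$ terms generated by the staleness bound carry an $\eta^2L^2$ factor, so they are shifted by one index and absorbed into the left side or the $O(\eta^2)$ term.

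The main obstacle is the cross term $\langle\nabla F(\theta_t),e_t\rangle$. It is formally first order in $\eta$ times a first-order difference, but making it a genuine $O(\eta^2)$ term without losing the constant 2 needs a boundedness or absorption argument. I will try absorption with a constant depending on $\eta L$ first, and fall back on a bounded-gradient assumption hidden in the $O(\eta^2)$ if that fails.
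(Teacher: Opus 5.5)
Your route differs from the paper's. The paper couples the AMDP iterates to a synchronous trajectory $\theta_t^{\mathrm{sync}}$, bounds $\mathbb{E}\|\Delta_t\|^2=O(\eta^2)$ by a Gr\"onwall recursion, and imports the synchronous SGD rate through $L$-smoothness. You instead run the descent lemma directly on the delayed iterates. That route can be made to work, but the step you flag as the main obstacle is a genuine gap, and neither of your fallbacks closes it.

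Young's inequality with a fixed weight turns the constant $2$ into $4$, and $\eta L\sigma^2$ into $2\eta L\sigma^2$. The bounded-gradient fallback is worse than you suggest. With $\hat g_{t-1}$ the stochastic gradient applied at step $t-1$, the bound $\eta\,\mathbb{E}|\langle\nabla F(\theta_t),e_t\rangle|\le L\eta^2\,\mathbb{E}\big[\|\nabla F(\theta_t)\|\,\|\hat g_{t-1}\|\big]$ is an $O(\eta^2)$ error \emph{per step}. After telescoping and dividing by $\eta T/2$ it becomes $O(\eta)$ in the final bound. That is the same order as $\eta L\sigma^2$, so the second-order claim is lost.

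This is intrinsic, not a loose estimate. Since $e_t\approx\eta\nabla^2F\,\hat g_{t-1}$, the cross term has size about $\eta^2\,\nabla F^\top\nabla^2F\,\nabla F$, which is first order \emph{relative} to the descent term. It can only be cancelled against something proportional to $\|\nabla F\|^2$; it cannot be bounded away in isolation.

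The absorption you hope for does exist, but only against slack you discarded in step one. The exact coefficient of $\|\nabla F(\theta_t)\|^2$ is $-\eta+\tfrac{L\eta^2}{2}=-\tfrac{\eta}{2}-\tfrac{\eta}{2}(1-\eta L)$. Keep the second piece and complete the square. Equivalently, use $-\langle a,b\rangle=\tfrac12\big(\|a-b\|^2-\|a\|^2-\|b\|^2\big)$ with $a=\nabla F(\theta_t)$ and $b=\nabla F(\theta_{t-\tau(t)})$, together with $\mathbb{E}_t\|\hat g_t\|^2\le\|b\|^2+\sigma^2$. Here $\mathbb{E}_t$ conditions on the samples used in updates $0,\dots,t-1$. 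This gives
\begin{equation*}
\mathbb{E}_t F(\theta_{t+1})\le F(\theta_t)-\tfrac{\eta}{2}\|\nabla F(\theta_t)\|^2-\tfrac{\eta}{2}(1-\eta L)\|\nabla F(\theta_{t-\tau(t)})\|^2+\tfrac{\eta}{2}\|e_t\|^2+\tfrac{L\eta^2}{2}\sigma^2,
\end{equation*}
with no cross term at all.

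The remaining term $\tfrac{\eta}{2}\mathbb{E}\|e_t\|^2\le\tfrac{L^2\eta^3}{2}\big(\mathbb{E}\|\nabla F(\theta_{t-1-\tau(t-1)})\|^2+\sigma^2\big)$ is genuinely third order. Its gradient part is exactly the negative term produced at step $t-1$, so after summation it cancels whenever $\eta L+\eta^2L^2\le 1$. If you want to keep only $\eta L\le 1$, bound it instead with a second-moment constant as in A3. Telescoping then gives $\frac{2(F(\theta_0)-F^\star)}{\eta T}+\eta L\sigma^2+\eta^2L^2\sigma^2$, with the constant $2$ intact. This is the clean form of your index-shifting step.

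Once repaired, your argument is more self-contained than the paper's. It yields an explicit second-order term. It needs no noise coupling between two trajectories and no Gr\"onwall factor $(1+C\eta)^t$ that grows with $T$. Indeed, the paper's final ``combine with synchronous SGD'' step, taken literally, must control the same kind of cross term, $2\,\mathbb{E}\langle\nabla F(\theta_t^{\mathrm{sync}}),\nabla F(\theta_t)-\nabla F(\theta_t^{\mathrm{sync}})\rangle$.
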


\begin{proof}
	We follow the standard smooth nonconvex SGD analysis
	\citep{bertsekas1989parallel, tsitsiklis1994asynchronous,
		lian2015asynchronous, zheng2017asynchronous}
	while incorporating the bounded AMDP mismatch. Since AMDP guarantees
	\(
	\tau(t)\le 1
	\),
	the stale gradient is evaluated on parameters differing by at most one
	optimization step. By \(L\)-smoothness,
	\begin{align}
		\|
		\nabla F(\theta_{t-1})
		-
		\nabla F(\theta_t)
		\|
		\le
		L
		\|
		\theta_{t-1}-\theta_t
		\|
		=
		O(\eta).
	\end{align}
	
	Define the discrepancy between AMDP and synchronous SGD iterates 
	\(
		\Delta_t
		=
		\theta_t
		-
		\theta_t^{\mathrm{sync}} 
	\) that evolves according to
	\(
		\Delta_{t+1}
		=
		\Delta_t
		-
		\eta
		\big(
		g(\theta_{t-1})
		-
		g(\theta_t^{\mathrm{sync}})
		\big).
	\) Combining the bounded delay perturbation with smoothness yields
	a recursion implying
	\(
		\mathbb{E}
		\big[
		\|\Delta_t\|^2
		\big]
		=
		O(\eta^2).
	\) The Lipschitz continuity of \(\nabla F\) then gives
	\begin{align}
		\mathbb{E}
		\big[
		\|
		\nabla F(\theta_t)
		-
		\nabla F(\theta_t^{\mathrm{sync}})
		\|^2
		\big]
		=
		O(\eta^2).
	\end{align}
	
	Combining this perturbation bound with the standard synchronous SGD
	convergence result completes the proof. An extension to Adam-type optimizers is provided in Appendix~\ref{appendix-B3}.
\end{proof}



\subsection{Multi-Directional Scheduling}\label{sec3.2}

Reducing the number of minibatches read by stage 0 improves convergence but induces significant bubbles, as shown in Figure~\ref{fig-mismatch}. AMDP avoids this overhead by employing \emph{multi-directional scheduling}: multiple
pipelines with complementary directions execute in parallel and fill one another's idle periods. Figure \ref{fig-amdp} shows a 4-GPU example with two counter-directed pipelines, each GPU hosting two stages. Pipelines process distinct minibatches independently, and GPUs holding the same stage synchronize gradients via all-reduce before applying updates.


\begin{figure*}[tb]
	\centering
	\includegraphics[width=0.85\textwidth]{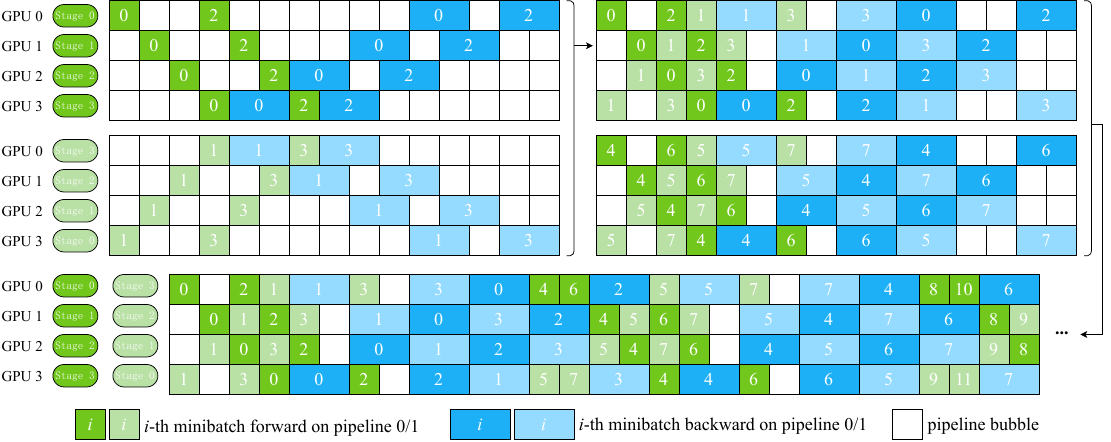}
	\caption{Illustration of AMDP bidirectional schedule, with 4 model stages and 4 pipeline devices deployed. Both trailing bubbles from minibatches 0–3 and leading bubbles for minibatches 4–6 are eliminated.}
	\label{fig-amdp}
\end{figure*}

AMDP constructs multi-directional schedules in three steps: (i) determining the number of pipelines  according to pipeline depth, (ii) assigning their directions while avoiding device conflicts, and (iii) filling residual bubbles. A single bidirectional schedule is insufficient for large $d$, so the number of pipelines must scale inversely with the active ratio $r$—the fraction of time a single pipeline keeps a GPU busy. If stage 0 reads only one minibatch, the pipeline is nearly serialized ($r=1/d$). With two minibatches, the number of concurrently active GPUs doubles while per-GPU work remains unchanged, giving $r=2/d$. Since AMDP fixes stage 0 to read two minibatches, setting the number of pipelines to $d/2$ fully eliminates bubbles. Figure~\ref{fig-8gpu4d} shows this scheme for $d=8$.


\begin{figure}[tb]
	\centering
	\includegraphics[width=0.9\linewidth]{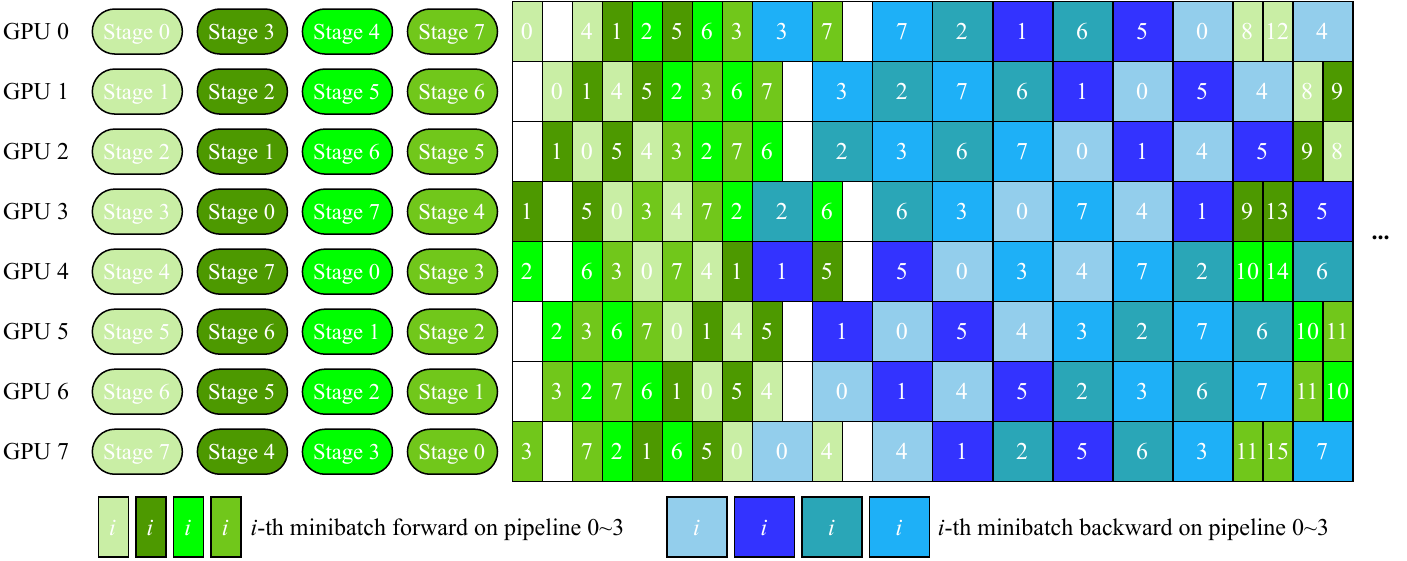}
	\caption{Illustration of AMDP multi-directional scheduling for $d$=8. Four (i.e., $d/2$) pipelines with distinct directions is sufficient to filling bubbles.}
	\label{fig-8gpu4d}
\end{figure}

Pipeline directions follow the Chimera-style mapping \citep{chimera}: for pipeline $j$, if $j$ is even, stage $i$ maps to GPU $(2j+i)\!\mod d$; if $j$ is odd, to $(2j-i+d+1)\!\mod d$. For example, in Figure~\ref{fig-8gpu4d}, pipeline~0 traverses GPUs $[0,1,2,\dots,7]$ while pipeline~1 traverses them in reverse order as $[3,2,1,0,7,6,5,4]$. However, unlike Chimera, AMDP must handle the asymmetry between forward and backward costs, which can create conflicts when multiple pipelines submit work to the same GPU. AMDP resolves these via a simple FIFO rule: the earliest operation proceeds and the later one is deferred (e.g., the conflict between the two backward passes on GPU~2 in Figure~\ref{fig-amdp}).

After direction assignment, each block of $d$ minibatches contains three types of bubbles. The middle bubble arises from inherent forward/backward imbalance and cannot be removed. It becomes dominant only when the imbalance is extreme or at very large $d$. The leading and trailing bubbles are scheduling gaps and are eliminated via controlled minibatch preloading. At the boundary of each $d$-minibatch segment, AMDP injects additional forward passes to fill idle slots. The number of minibatches inserted equals the floor of the backward-to-forward cost ratio. For instance, in Figure~\ref{fig-amdp} (top-right), preloading minibatches~4 and~6 removes both the trailing bubble of minibatches~0\textasciitilde3 and the leading bubble of minibatches~4\textasciitilde6, yielding balanced and efficient pipeline execution.

\subsection{Gradient Accumulation Updates}\label{sec3.3}
Naively applying the asynchronous multi-directional schedule introduces two drawbacks: (i) an all-reduce after every backward pass, incurring heavy communication overhead, and (ii) disruption of the 1F1B schedule by bubble-filling, which can lead to multi-step mismatches. For example, as shown in Figure \ref{fig-amdp}, two parameter updates (from minibatch 2 and minibatch 4) occur between the forward and
backward passes of minibatch 6 on GPU 0.

To address these issues, AMDP introduces a gradient accumulation update strategy. Instead of updating after every backward pass, gradients are accumulated across multiple minibatches until a predefined threshold is reached. At that point, the accumulated gradients are reduced across devices and applied during the next bubble. This design lowers communication frequency, thereby mitigating communication overhead. Furthermore, it ensures that only the first $d$ minibatches experience a bounded one-step mismatch per update, which enhances training convergence. In practice, the threshold is  typically set much larger than $d$, rendering the effect of mismatch negligible. This makes AMDP fundamentally different from PipeDream-style methods, where mismatch typically grows with the number of pipeline stages.

\begin{figure}[tb]
	\centering
	\begin{subfigure}[t]{0.9\columnwidth}
		\includegraphics[width=\columnwidth]{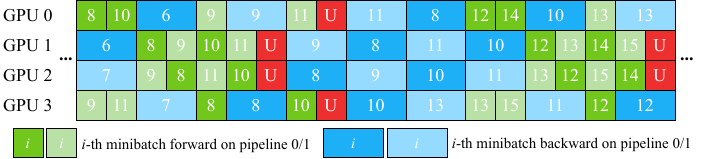}
		\caption{$d$=4, threshold=4}
		\label{fig-update strategy a}
	\end{subfigure}
	\begin{subfigure}[t]{0.9\columnwidth}
		\includegraphics[width=\columnwidth]{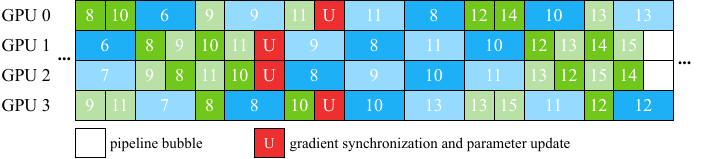}
		\caption{$d$=4, threshold=8}
		\label{fig-update strategy b}
	\end{subfigure}
	\caption{Gradient accumulation update strategy in AMDP. (a) Each minibatch experiences a one-step parameter mismatch. (b) Only the first $d$ minibatches in the window incur mismatch.}
	\label{fig-update strategy}
\end{figure}

Figure \ref{fig-update strategy} illustrates the gradient accumulation update strategy in AMDP. For clarity, non-essential computation steps are omitted to highlight the effect of accumulation threshold on parameter updates. When the threshold equals the pipeline depth (Figure \ref{fig-update strategy a}), each minibatch is subject to a one-step parameter mismatch. With a larger threshold (Figure \ref{fig-update strategy b}), only the first $d$ minibatches in each accumulation window are exposed to mismatch, while subsequent minibatches use consistent parameters. For example, when accumulating gradients from minibatches 8 to 15, parameter updates occur between the forward and backward passes only for minibatches 8\textasciitilde11.

\subsection{Zero Redundancy Optimizer}
Deploying multiple pipelines increases memory usage, as each device must store parameters, gradients, and optimizer states (e.g., momentum and variance in Adam \citep{adam, adamw}) for multiple stages. To reduce optimizer-state memory usage without incurring the latency and contention of CPU offloading, AMDP incorporates ZeRO \citep{zero}. In our design, GPU $i$ is solely responsible for updating parameters of stage $i$, and the optimizer for stage $i$ resides exclusively on GPU $i$. During updates, GPUs holding replicas of stage $i$ send their gradients to GPU $i$ for reduction, replacing the all-reduce used in naive designs; once the update is applied, the new parameters are
broadcast back to all replicas. This scheme reduces optimizer state memory on each GPU to $2/d$ of the naive requirement, scaling inversely with the number of pipelines. Furthermore, the communication pattern becomes a reduce followed by a broadcast, which has the same total cost as all-reduce and therefore introduces no additional communication overhead. Synchronization occurs only once per update and does not grow with the number of pipelines.

\section{Experiments}

\subsection{Experimental Setup}
\textbf{Hardware and Software.} All experiments are conducted on Linux servers running Ubuntu~20.04 (kernel~5.15) equipped with 8 NVIDIA A800 80GB GPUs interconnected via NVLink 3.0 (400 GB/s), dual-socket 128-core CPUs, and 1 TB DDR4 memory. All methods use the NVIDIA PyTorch container with tag \texttt{24.03-py3\footnote{nvcr.io/nvidia/pytorch:24.03-py3}}, ensuring a consistent software environment across baselines. A detailed description of the servers are provided in Appendix \ref{appendix-A}.

\textbf{Baselines and Implementation.} We evaluate AMDP against both synchronous and asynchronous baselines: synchronous approaches include DAPPLE \citep{dapple}, interleaved 1F1B (Inter-1F1B) \citep{interleaved}, Chimera \citep{chimera}, and zero bubble pipeline parallelism (ZB-V) \citep{zerobubble}; asynchronous approaches include PipeDream \citep{pipedream}, XPipe \citep{xpipe}, PipeDream-2BW \citep{pipedream-2bw}, and vNAG \citep{ajanthannesterov}. For DAPPLE and Inter-1F1B, we adopt the open source implementation from Megatron-LM \citep{megatrongit}; for ZB-V and vNAG, we use the codes released by the authors. AMDP, Chimera, and other asynchronous approaches are implemented by us on top of Megatron-LM. The sourse code is released on GitHub\footnote{https://github.com/Vinsmoke86/AMDP}. 

\begin{table}[t]
	\small
	\caption{Configurations of benchmark models.}
	\label{tab-model}
	\begin{center}
	\resizebox{\columnwidth}{!}{ 
	\begin{tabular}{lll}
		\toprule
		Configuration & GPT-style model & BERT-style model\\
		\midrule
		\# Layers & 48 & 32\\
		\# Attention Heads & 25 & 32\\
		Hidden Size & 1600 & 1600\\
		Sequence Length & 1024 & 1024\\
		\# Parameters & 1557686400 & 1036179458 \\
		\bottomrule
	\end{tabular}}
	\end{center}
\end{table}

\textbf{Models and Datasets.} We test two representative architectures: a GPT-style autoregressive language model and a BERT-style bidirectional encoder, with detailed configurations in Table \ref{tab-model}. The GPT-style model is trained on OpenWebText \citep{openwebtext}, while the BERT-style model is trained on Wikipedia \citep{bert}. 

\textbf{Metrics.} We evaluate the performance of each approach using three key metrics: throughput, memory footprint, and training convergence. Throughput, defined as the number of tokens processed per second, reflects the computational efficiency and hardware utilization of each approach. Memory footprint is assessed in terms of both its distribution and peak consumption across devices. The training convergence is assessed by monitoring the training loss against both the number of iterations and the wall-clock time, along with the validation perplexity.

\textbf{Training Settings.} Unless otherwise stated, we set the microbatch size to 4 for all approaches. All experiments employ AdamW \citep{adamw} with mixed-precision training \citep{amp}. Learning rates and other hyperparameters follow standard configurations commonly used in in GPT/BERT training to ensure fair comparability. Because the author-released implementation of vNAG does not run under our model settings, we adapt AMDP to match their configuration and report the results in Appendix \ref{appendix-C}.

\subsection{Main Results}

\begin{table*}[tb]
	\caption{Throughput (in ktokens/s) of training the GPT- and BERT-style models on 8 GPUs with NVLink connection. The best and second-best results are \textbf{bolded} and \underline{underlined}. $d$ denotes the pipeline depth and $b$ indicates the number of samples processed per update.}
	\label{tab-throughout}
	\centering
	{\small
		\begin{tabular}{ccc|cccccccc}
			\toprule
			Model& $d$& $b$ & DAPPLE & Inter-1F1B &Chimera& ZB-V &PipeDream& XPipe&PipeDream-2BW&AMDP \\
			\midrule
			\multirow{4}{*}{\makecell[c]{GPT\\-style}} & 4 & 16 & 25.2 &35.4& 25.3& 26.3 &34.5&38.5&\underline{38.6}&\textbf{39.1}\\
			& 4 & 64 & 36.3 & 39.8 &30.1& 30.9 &34.5&40.7&\underline{41.0}&\textbf{42.1}\\
			& 8 & 32 & 44.0 & 57.0 &46.3& 45.2 &61.0&66.0&\underline{70.3}&\textbf{75.5}\\
			& 8 & 128 & 61.9 & 67.5&56.0& 54.4 &61.0&69.7&\underline{71.6}&\textbf{83.7}\\
			\midrule
			\multirow{4}{*}{\makecell[c]{BERT\\-style}} & 4 & 16 & 26.0 & 34.7 &29.3& 28.9 &37.2&37.8&\underline{41.0}&\textbf{41.6}\\
			& 4 & 64 & 37.1 & 41.0 &34.5& 33.8 &37.2&41.7&\underline{42.9}&\textbf{43.6}\\
			& 8 & 32 & 45.2 & 37.5 &52.8& 40.4 &65.0&73.6&\underline{74.3}&\textbf{78.5}\\
			& 8 & 128 & 64.8 & 58.8 &63.8& 54.1 &65.0&75.6&\underline{75.8}&\textbf{86.1}\\
			\bottomrule
		\end{tabular}
	}
\end{table*}

\begin{figure*}[tb]
	\centering
	\begin{subfigure}[t]{0.9\columnwidth}
		\includegraphics[width=\columnwidth]{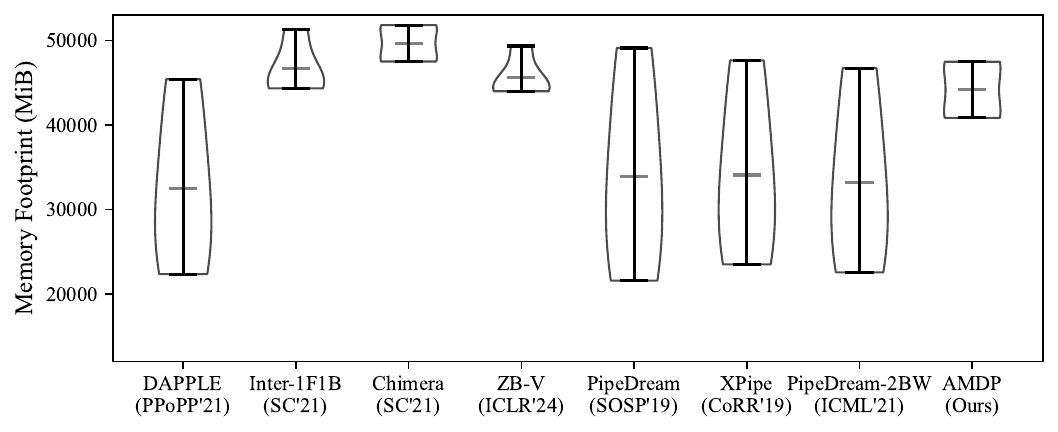}
		\caption{Training GPT-style model with $d$=4}
		\label{fig-memory gpt 4}
	\end{subfigure}
	\begin{subfigure}[t]{0.9\columnwidth}
		\includegraphics[width=\columnwidth]{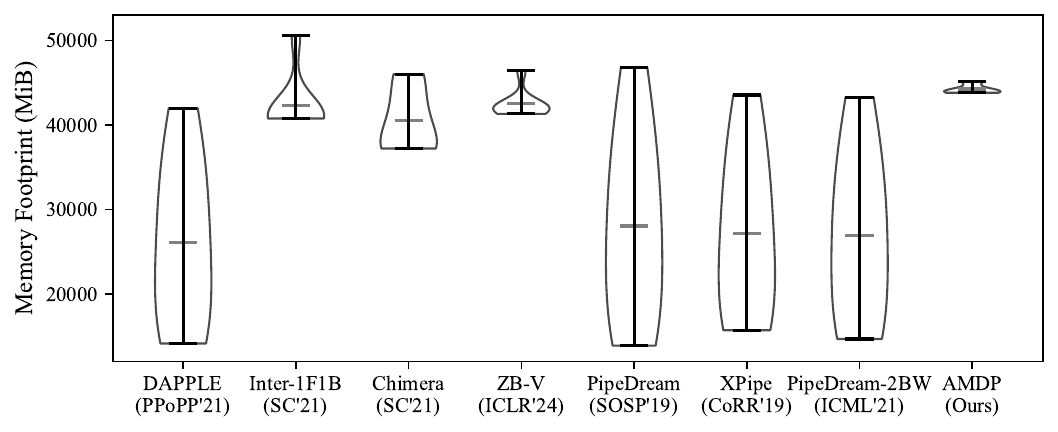}
		\caption{Training GPT-style model with $d$=8}
		\label{fig-memory gpt 8}
	\end{subfigure}\\
	\begin{subfigure}[t]{0.9\columnwidth}
		\includegraphics[width=\columnwidth]{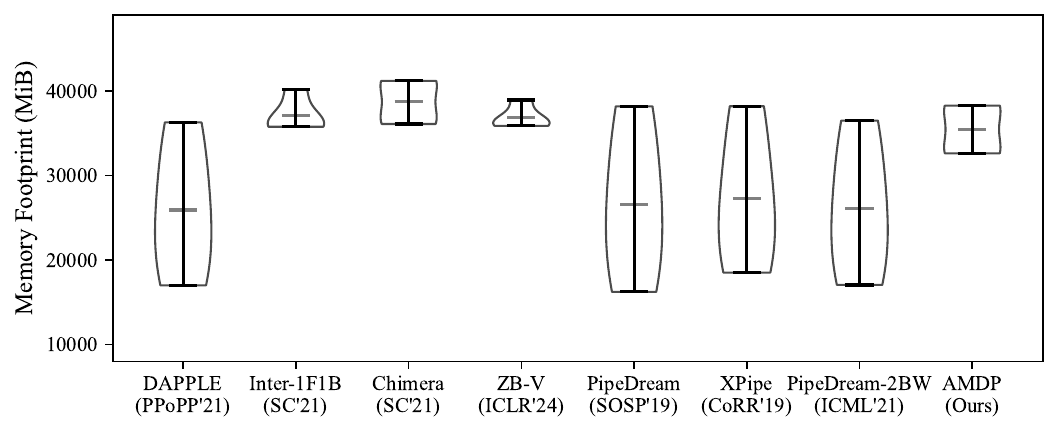}
		\caption{Training BERT-style model with $d$=4}
		\label{fig-memory bert 4}
	\end{subfigure}
	\begin{subfigure}[t]{0.9\columnwidth}
		\includegraphics[width=\columnwidth]{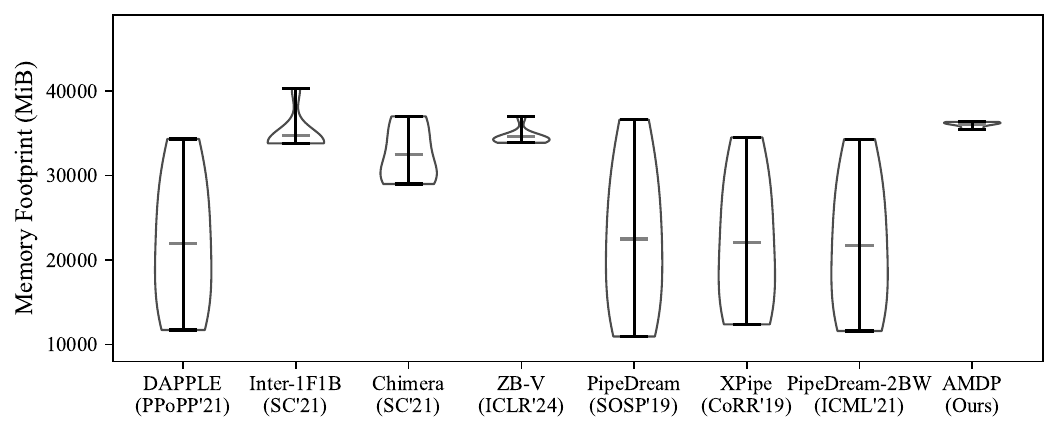}
		\caption{Training BERT-style model with $d$=8}
		\label{fig-memory bert 8}
	\end{subfigure}
	\caption{Memory footprint comparison.}
	\label{fig-memory}
\end{figure*}

\textbf{Throughput.} Table \ref{tab-throughout} summarizes throughput of training the GPT- and BERT-style models on 8 GPUs under varying pipeline depths $d$ and update batch sizes $b$. Key findings are: (1) AMDP consistently achieves the highest throughput across both models and all configurations, outperforming PipeDream-2BW by up to 17\%. On GPT-style models, AMDP yields relative gains of 1.01–1.17×, while on BERT-style models, gains range from 1.02–1.13×. These improvements arise because language models exhibit uneven stage partitioning (e.g., large embeddings in first/last stages), creating imbalance-induced bubbles that AMDP mitigates via multi-directional scheduling, reducing bubble time to $2/d$. (2) The relative advantage of AMDP grows with pipeline depth and update batch size. Larger $d$ exacerbates stage imbalance, while larger $b$ reduces update frequency, both amplifying AMDP’s efficiency. (3) Among asynchronous baselines, PipeDream-2BW delivers the strongest throughput, consistently outperforming XPipe and PipeDream. This aligns with its design of double-buffered updates, which eliminate bubbles without additional computation.

Table \ref{tab-gpt-2node} further reports results on the GPT-style model using a 2-node (16 GPUs) setup, including both pure pipeline parallelism and hybrid pipeline+data parallelism configurations. Key findings are: (1) AMDP continues to outperform all baselines in these larger distributed
settings, demonstrating that its benefits extend beyond single-node execution. (2) The throughput advantage of AMDP becomes increasingly apparent
on 16 GPUs, potentially because cross-node communication further amplifies the pipeline inefficiencies of baseline approaches.

\begin{table}[tb]
	\small
	\centering
	\caption{Throughput of training GPT-style model on 16 GPUs with NVLink and InfiniBand connections. The best and second-best results are \textbf{bolded} and \underline{underlined}}
	\label{tab-gpt-2node}
	\resizebox{\columnwidth}{!}{ 
		\begin{tabular}{cc|ccccc}
			\toprule
			$d$ & $b$  &DAPPLE&Inter-1F1B& XPipe&PipeDream-2BW&AMDP\\
			\midrule
			16 & 128  & 69.4&64.5&107.4&\underline{110.1}&\textbf{124.3}\\
			8 & 128 & 81.2&83.0&136.5&\underline{140.9}&\textbf{159.8}\\
			\bottomrule
	\end{tabular}}
\end{table}

\begin{figure*}[tb]
	\centering
	\begin{subfigure}[t]{0.86\columnwidth}
		\includegraphics[width=\columnwidth]{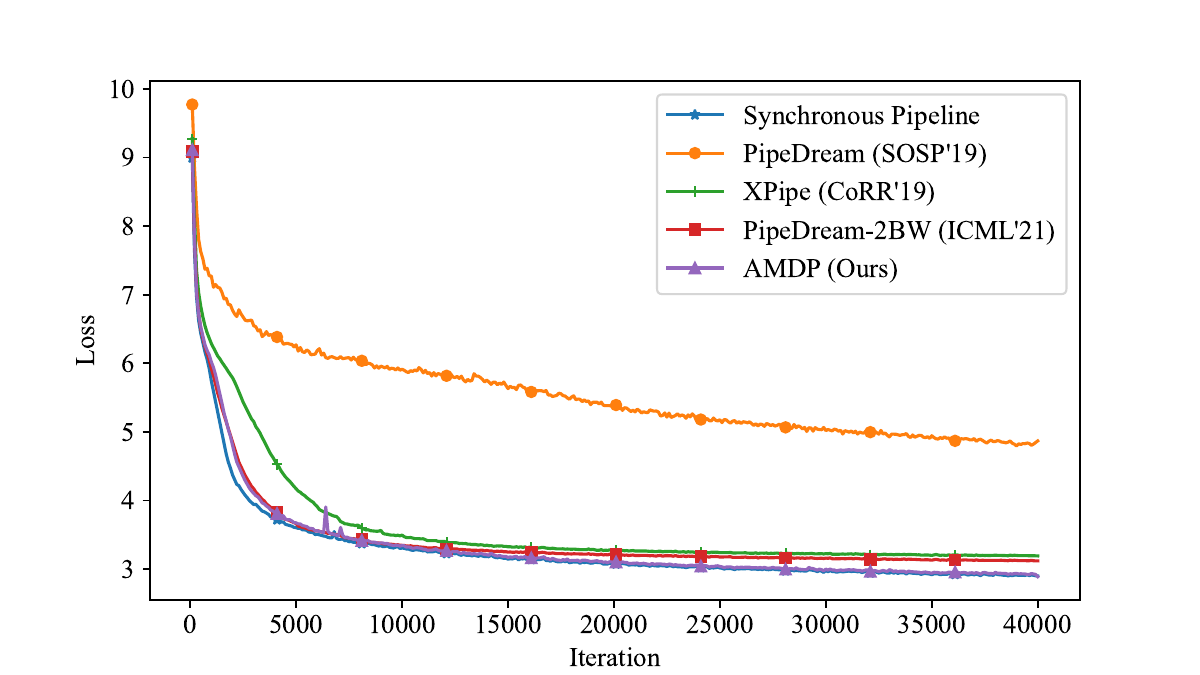}
		\caption{Loss versus iterations (GPT)}
		\label{fig-loss iter gpt}
	\end{subfigure}
	\begin{subfigure}[t]{0.86\columnwidth}
		\includegraphics[width=\columnwidth]{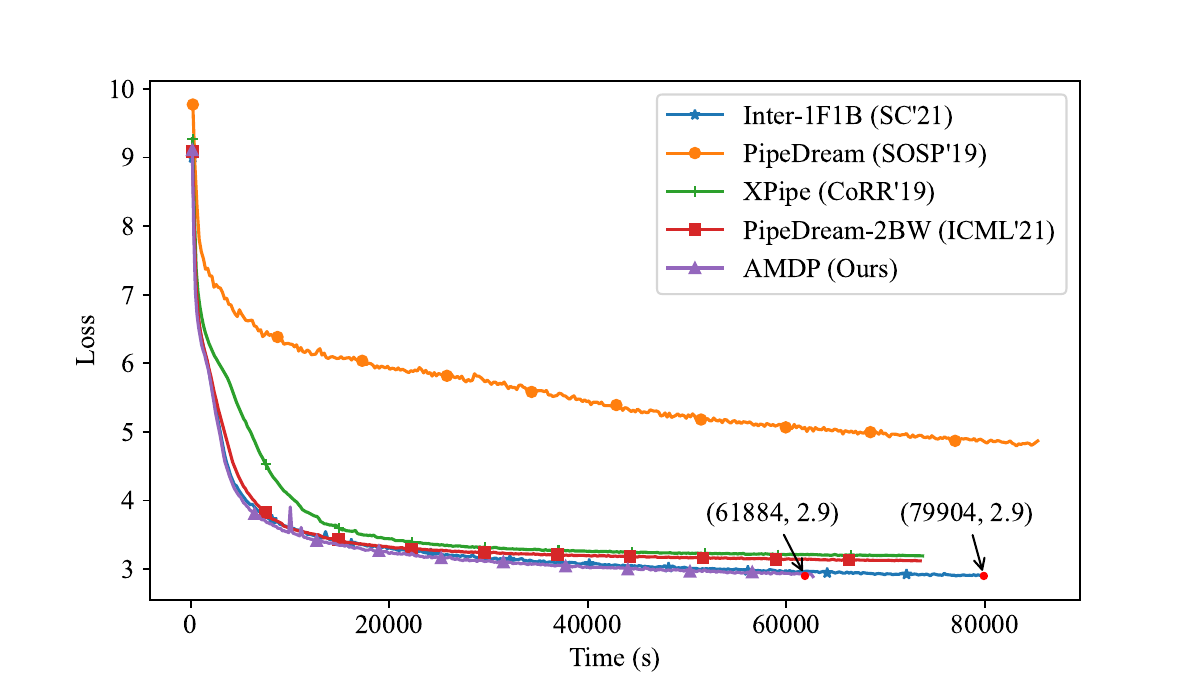}
		\caption{Loss versus time (GPT)}
		\label{fig-loss time gpt}
	\end{subfigure}\\
	\begin{subfigure}[t]{0.86\columnwidth}
		\includegraphics[width=\columnwidth]{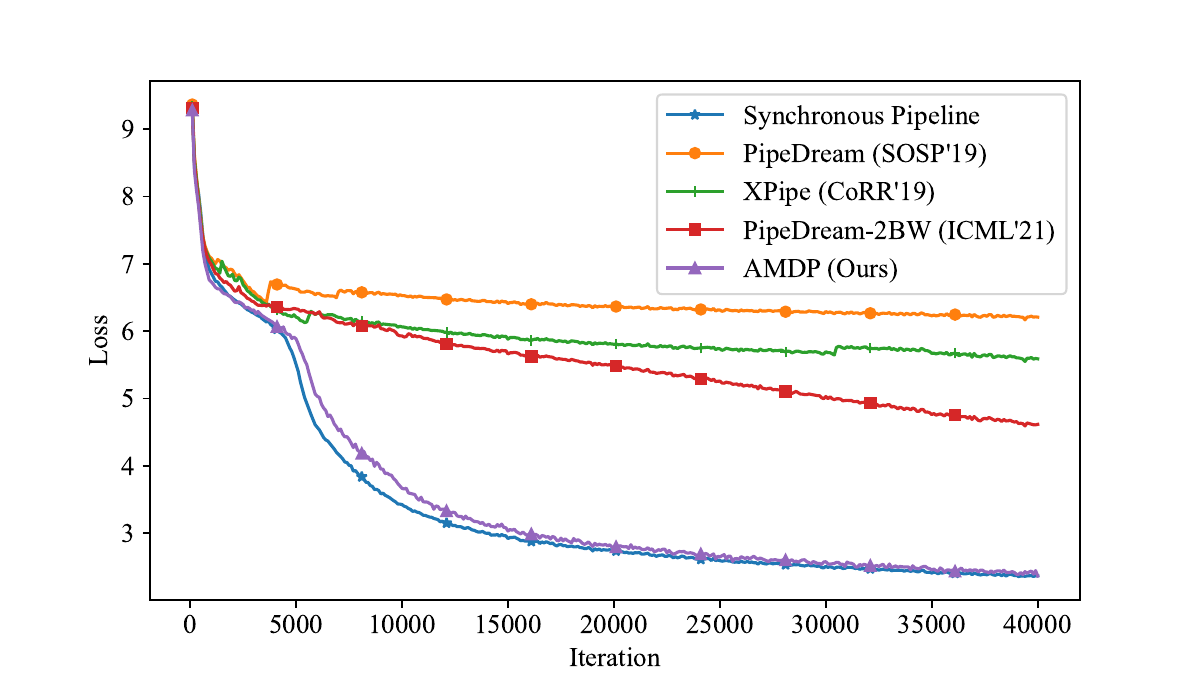}
		\caption{Loss versus iterations (BERT)}
		\label{fig-loss iter bert}
	\end{subfigure}
	\begin{subfigure}[t]{0.86\columnwidth}
		\includegraphics[width=\columnwidth]{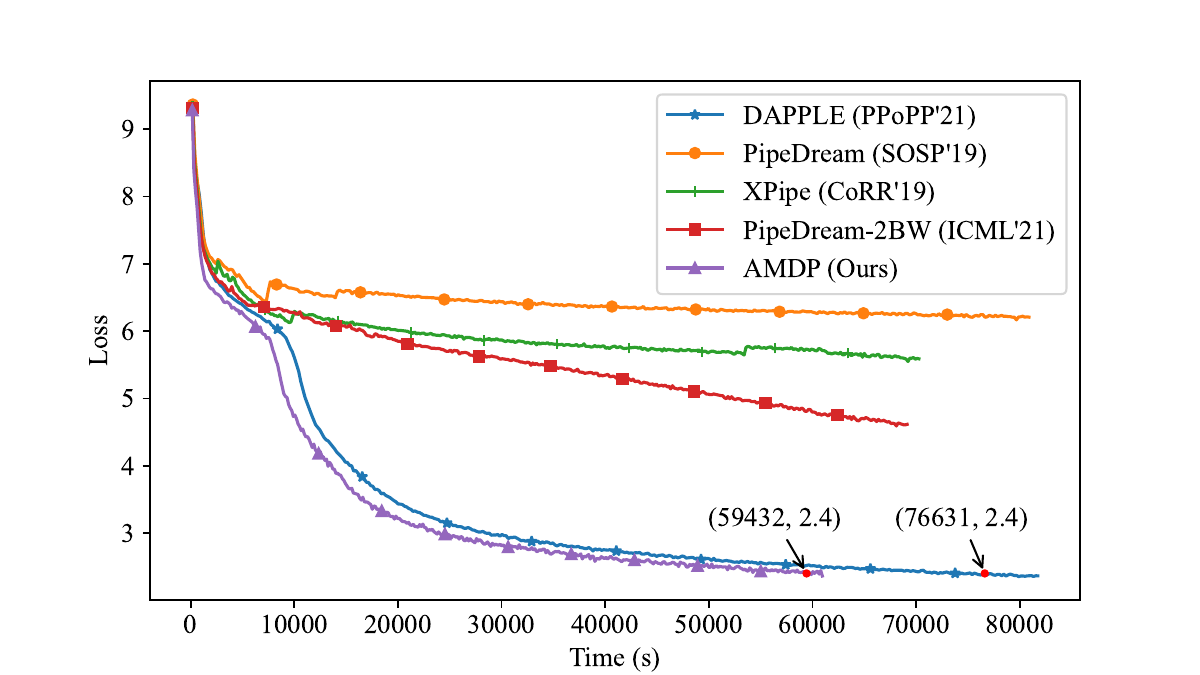}
		\caption{Loss versus time (BERT)}
		\label{fig-loss time bert}
	\end{subfigure}
	\caption{Training convergence comparison.}
	\label{fig-loss}
\end{figure*}

\textbf{Memory Footprint.} Figure \ref{fig-memory} illustrates the memory footprint distribution of training the GPT- and BERT-style models on 8 GPUs under two pipeline depths. We summarize the following observations: (1) AMDP achieves more balanced per-GPU memory utilization, which is a direct consequence of its multi-directional scheduling that spreads the workload more evenly across devices. (2) Inter-1F1B exhibits the largest peak memory footprint, as the first GPU must retain additional activations, amounting to 1.5 microbatches for $d=4$ and 3.5 microbatches for $d=8$ compared with other approaches. While asynchronous methods generally maintain multiple parameter versions, prior work \citep{lin2025weipipe} shows that activation storage dominates the memory footprint in modern Transformers. This explains why Inter-1F1B suffers from significantly higher peak memory. (3) AMDP incurs slightly a higher peak memory footprint than PipeDream-2BW and XPipe due to the need to temporarily retain additional parameters and gradients. However, this overhead is minor relative to the improvements in convergence stability and does not hinder scalability to larger clusters.

\textbf{Training Convergence.} Figure \ref{fig-loss} presents the convergence results under $d=8$, $b=128$, and a learning rate of $1.5\times10^{-4}$ over 40k iterations for all methods. Synchronous approaches (e.g., DAPPLE, Inter-1F1B, Chimera, and ZB-V) exhibit nearly identical behavior and are collectively reported as ``Synchronous Pipeline". For the loss–time comparisons, we highlight the best synchronous baseline: Inter-1F1B for GPT-style model and DAPPLE for BERT-style model. Importantly, the global batch size \(b\) (i.e., samples per update) is kept identical across all methods in each configuration. Consequently, convergence measured in terms of loss versus iterations is equivalent to convergence measured in terms of loss versus processed tokens. This ensures that the comparisons isolate the effect of pipeline
scheduling and update semantics, rather than differences in optimization hyperparameters or effective training workload.

Key observations are: (1) AMDP closely tracks the convergence trajectory of synchronous baselines. The small early gap is expected, as the first few iterations accumulate gradients under mild mismatch; once gradients stabilize, AMDP’s higher utilization enables it to catch up rapidly. After 40k iterations, AMDP reaches a training loss of 2.90 on the GPT-style model (vs.\ 2.88 for Inter-1F1B) and 2.36 on the BERT-style model (matching DAPPLE), confirming the effectiveness of the one-step mismatch bound. (2) AMDP significantly shortens the time to reach target losses: 23\% faster than Inter-1F1B on GPT-style model at loss 2.9, and 22\% faster than DAPPLE on BERT-style model at loss 2.4. These gains stem from AMDP’s higher throughput while maintaining convergence, highlighting its practical value for efficient large-scale models training. 

\begin{table}[tb]
	\small
	\centering
	\caption{Perplexity scores on the validation set at 40k iterations. The best and second-best results are \textbf{bolded} and \underline{underlined}}
	\label{tab-ppl}
	\resizebox{\columnwidth}{!}{ 
		\begin{tabular}{lccccc}
			\toprule
			Model & PipeDream & XPipe&PipeDream-2BW&DAPPLE&AMDP\\
			\midrule
			GPT-style & 109.0  &  22.9&21.7&\textbf{2.04}&\underline{2.06}\\
			BERT-style & 400.1 & 184.8&65.7&\underline{8.3}&\textbf{1.7} \\
			\bottomrule
	\end{tabular}}
\end{table}
For completeness, Table~\ref{tab-ppl} reports the validation perplexity (PPL) at 40k iterations. AMDP achieves PPL values comparable to those of synchronous baselines at similar training losses, indicating that the bounded asynchrony introduced by AMDP does not cause noticeable degradation in model quality or downstream generalization performance. Together with the loss curves in Figure~\ref{fig-loss}, these results show that AMDP not only preserves near-synchronous optimization behavior during training, but also maintains comparable validation performance after convergence. This observation is consistent with the theoretical analysis in Section~\ref{sec3.1}, which shows that AMDP introduces only a second-order perturbation relative to synchronous optimization. 

\begin{table}[tb]
	\small
	\centering
	\caption{Throughput of training the GPT-style and BERT-style models on 8 GPUs with different gradient-accumulation thresholds. The best results are \textbf{bolded}.}
	\label{tab-gat}
	\resizebox{\columnwidth}{!}{ 
		\begin{tabular}{lcccc}
			\toprule
			Model & 1 ($b=32$)& 2 ($b=64$)&4 ($b=128$)&8 ($b=256$)\\
			\midrule
			GPT-style & 75.5  &  78.9&\textbf{83.7}&83.3\\
			BERT-style & 78.5 & 81.0&\textbf{86.1}&84.6 \\
			\bottomrule
	\end{tabular}}
\end{table}

\subsection{Gradient-Accumulation Threshold Study}

To investigate the impact of the gradient-accumulation threshold in AMDP, we train both GPT-style and BERT-style models on 8 GPUs under
varying update batch sizes \(b\). Table \ref{tab-gat} summarizes the resulting throughput in kilo tokens/s. The results show that a moderate accumulation threshold achieves the best throughput, while further increasing \(b\) yields diminishing returns. This behavior reflects the trade-off between update frequency and pipeline utilization. Increasing the accumulation threshold amortizes synchronization and optimizer-update overhead across more minibatches, thereby reducing pipeline bubbles and improving hardware utilization. However, once the pipeline approaches near-full utilization, additional accumulation provides limited further benefit.

The accumulation threshold also affects optimization dynamics. Importantly, AMDP maintains a strict one-step bound on parameter
mismatch regardless of the accumulation size, ensuring that gradient staleness remains controlled. As a result, moderate accumulation thresholds are expected to have only limited impact on convergence and may even improve training stability by reducing gradient variance across updates. In contrast, excessively large accumulation behaves similarly to large-batch training, where less frequent parameter updates can slow
optimization progress despite improved throughput. This trade-off suggests that a moderate accumulation threshold provides the best balance between training efficiency and convergence behavior.

\begin{figure}[tb]
	\centering
	\begin{subfigure}[t]{0.83\columnwidth}
		\includegraphics[width=\columnwidth]{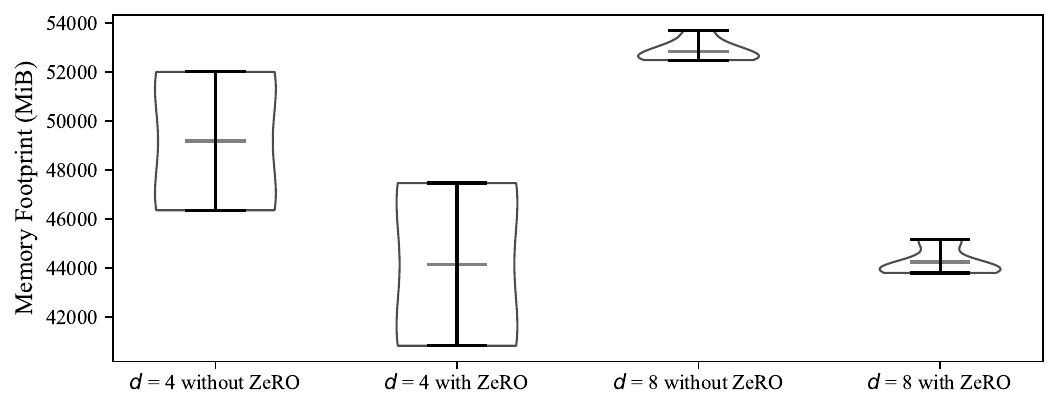}
		\caption{Training GPT-style model}
		\label{fig-zero gpt}
	\end{subfigure}
	\begin{subfigure}[t]{0.83\columnwidth}
		\includegraphics[width=\columnwidth]{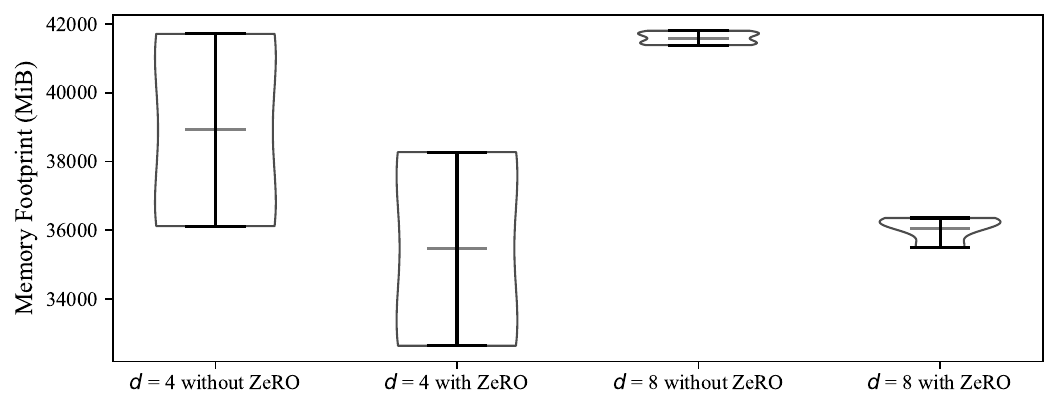}
		\caption{Training BERT-style model}
		\label{fig-zero bert}
	\end{subfigure}
	\caption{Memory footprint results of ablation study.}
	\label{fig-zero}
\end{figure}

\subsection{Ablation Study}
To verify the impact of zero redundancy optimizer on memory footprint, we remove it from AMDP and conduct experiment under the same configuration as the training convergence study. Figure \ref{fig-zero} shows memory results, from which we observe: without ZeRO, average GPU memory usage rises by 11.4\% and 19.4\% for GPT-style model (at $d=4$ and $d=8$), and by 9.7\% and 15.4\% for BERT-style model, respectively. Table \ref{tab-zero} further indicates that enabling ZeRO improves throughput by approximately 4\%, as optimizer state sharding reduces redundant computations. These findings confirm that ZeRO not only alleviates memory pressure but also delivers modest performance gains, making it a crucial component of AMDP’s scalability.

\begin{table}[tb]
	\centering
	\caption{Throughput results of ablation study. The best results are \textbf{bolded}.}
	\label{tab-zero}
	\small{
		\begin{tabular}{lcc}
			\toprule
			Configuration & GPT-style model & BERT-style model\\
			\midrule
			without ZeRO & 80.3 (-4.1\%) &  82.7 (-3.9\%)\\
			with ZeRO & \textbf{83.7} & \textbf{86.1} \\
			\bottomrule
		\end{tabular}
	}
\end{table}

\section{Conclusions}

In this paper, we propose \textbf{AMDP}, an \underline{\textbf{a}}synchronous \underline{\textbf{m}}ulti-\underline{\textbf{d}}irectional \underline{\textbf{p}}ipeline-parallel training scheme that achieves state-of-the-art throughput among asynchronous pipeline approaches while maintaining near-synchronous convergence with only marginal additional memory overhead. By structurally bounding forward–backward parameter mismatch and integrating multi-directional scheduling with ZeRO-based state partitioning, AMDP substantially improves hardware utilization without compromising optimization stability. Extensive experiments demonstrate that explicitly limiting pipeline staleness is critical for convergence, while multi-directional execution and ZeRO-based partitioning jointly enable high throughput and low memory footprint. As future work, we plan to explore mismatch-aware learning-rate adaptation technique to further enhance optimization robustness under mild asynchrony.

\section*{Impact Statement}

This paper presents work whose goal is to advance the field of Machine
Learning. There are many potential societal consequences of our work, none
which we feel must be specifically highlighted here.

\bibliography{example_paper}
\bibliographystyle{icml2026}

\newpage

\appendix
\onecolumn
\section{Implementation Details}
\label{appendix-A}

\subsection{Server and System Configuration}

All experiments are conducted on dedicated Linux servers running Ubuntu~20.04
with kernel~5.15. The compute node contains:

\begin{itemize}
	\item \textbf{CPUs:} Dual-socket AMD EPYC~7763 (128~physical cores in two socket).
	\item \textbf{Memory:} 1~TB DDR4-3200, NUMA-balanced across both sockets.
	\item \textbf{GPUs:} 8~NVIDIA A800~80GB SXM5 GPUs, connected via
	NVLink~3.0 in a hybrid cube-mesh topology, providing 400~GB/s bidirectional
	bandwidth.
	\item \textbf{Networking:} Local experiments use NCCL's NVLink
	transport; multi-node configuration supports
	dual 200~Gbps HDR InfiniBand.
	\item \textbf{Storage:} Parallel NVMe scratch with 7.4~GB/s sequential read
	bandwidth for fast dataset preprocessing.
\end{itemize}

The software stack is based on the NVIDIA PyTorch container with tag \texttt{24.03-py3\footnote{nvcr.io/nvidia/pytorch:24.03-py3}}, consisting of:
PyTorch~2.3.0, CUDA~12.4, cuDNN 9.0.0, NCCL 2.20, Apex,
TransformerEngine~1.4.0, and Python~3.10.

\subsection{Code Modifications}

We summarize the key extensions made to Megatron-LM to support asynchronous
multi-directional execution:

\begin{itemize}
	\item \textbf{Pipeline Engine Extensions.}
	We extend the Megatron pipeline engine to support asynchronous execution by
	enabling \texttt{async\_op} for gradient reductions and parameter updates
	and introducing an \texttt{async\_schedules} module. This module implements
	the scheduling logic for PipeDream, PipeDream-2BW, and AMDP, including
	forward/backward task queues, causal dependency enforcement, and
	multi-directional stage assignment.
	
	\item \textbf{Gradient-Accumulation Mechanism.}
	AMDP adopts a window-based gradient accumulation mechanism that aligns with
	ZeRO partitioning. Gradients are accumulated only on the parameter-owner
	rank, while non-owner ranks store only the transient gradient tensors needed
	for the reduce operation. This avoids redundant memory usage, guarantees
	correct accumulation over the update window, and ensures that communication
	occurs exactly at the intended synchronization boundaries.
	
	\item \textbf{ZeRO Integration Logic.}
	We modify the initialization and optimizer-step routines to support
	multi-directional pipelines under ZeRO. Optimizer states are
	materialized exclusively on the owner rank of each parameter shard. Each
	all-reduce is replaced by a \emph{reduce} followed by a \emph{broadcast}
	of identical communication complexity, triggered once per update window and
	aligned with AMDP’s scheduling.
\end{itemize}

\subsection{Experiment Workflow}

For reproducibility, we outline the common workflow used for all baselines and
AMDP:

\begin{itemize}
	\item \textbf{Data Preprocessing.}
	Training data are first stored as loose JSON (one text record per line) and
	then converted into Megatron-LM’s binary format following the official
	preprocessing pipeline. All baselines use identical datasets.
	
	\item \textbf{Pretraining.}
	Pretraining follows the standard Megatron-LM workflow with additional
	arguments (e.g., \texttt{--enable-fourdirectional-pipeline}). We provide
	\texttt{pretrain\_gpt\_distributed\_async.sh} for AMDP and analogous scripts
	for baseline methods in the \emph{examples} directory.
	
	\item \textbf{Logging and Profiling.}
	Throughput, per-GPU memory, and iteration-time breakdowns are collected
	using Megatron-LM’s profiler with auxiliary custom hooks. Experiments across
	different depths (4--8) and model families (GPT, BERT) share the same
	workflow to ensure consistency.
\end{itemize}

\section{Convergence Analysis of AMDP}
\label{appendix-B}

This part presents a formal convergence analysis of AMDP, demonstrating that it achieves convergence behavior comparable to synchronous adaptive optimization under the bounded delay condition ($\tau_{\max}=1$).  We further prove that this delay bound remains invariant under variations in pipeline depth, multi-node deployment, and multi-directional stage permutations.

\subsection{Problem Setup and Assumptions}
\label{appendix-B1}

Consider the stochastic optimization problem
\begin{equation}
	\min_{\theta\in\mathbb{R}^p} F(\theta),
	\qquad
	F(\theta)=\mathbb{E}_{\xi}[f(\theta;\xi)],
\end{equation}
where $\theta\in\mathbb{R}^p$ denotes the model parameters and $\xi$ is a random minibatch sample.

We adopt the following standard assumptions:

\begin{itemize}
	
	\item[A1.]
	(\textbf{$L$-smoothness})
	The objective function $F$ is differentiable and satisfies
	\begin{equation}
		\|\nabla F(x)-\nabla F(y)\|
		\le
		L\|x-y\|,
		\qquad
		\forall x,y\in\mathbb{R}^p.
	\end{equation}
	
	\item[A2.]
	(\textbf{Unbiased stochastic gradients})
	The stochastic gradient estimator
	$g(\theta;\xi)$ satisfies
	\begin{align}
		\mathbb{E}_{\xi}[g(\theta;\xi)]
		&=
		\nabla F(\theta), \\
		\mathbb{E}_{\xi}
		\left[
		\|g(\theta;\xi)-\nabla F(\theta)\|^2
		\right]
		&\le
		\sigma^2.
	\end{align}
	
	\item[A3.]
	(\textbf{Bounded second moment})
	There exists $G>0$ such that
	\begin{equation}
		\mathbb{E}
		\left[
		\|g(\theta;\xi)\|^2
		\right]
		\le
		G^2,
		\qquad
		\forall \theta.
	\end{equation}
	
	\item[A4.]
	(\textbf{Bounded adaptive preconditioner})
	The adaptive preconditioner $P_t$ is a positive diagonal matrix satisfying
	\begin{equation}
		0<c_{\min}
		\le
		\lambda_{\min}(P_t)
		\le
		\lambda_{\max}(P_t)
		\le
		c_{\max}.
	\end{equation}
	
	\item[A5.]
	(\textbf{Lipschitz adaptive mappings})
	The momentum estimator  $m_t=\Phi_t(g_0,g_1,\dots,g_t)$ and adaptive preconditioner $P_t=\Psi_t(g_0,g_1,\dots,g_t)$ are Lipschitz with respect to the gradient history:
	\begin{align}
		\|m_t-\tilde m_t\|
		&\le
		L_{\Phi}
		\max_{k\le t}
		\|g_k-\tilde g_k\|, \\
		\|P_t-\tilde P_t\|
		&\le
		L_{\Psi}
		\max_{k\le t}
		\|g_k-\tilde g_k\|.
	\end{align}
	
\end{itemize}

\paragraph{General Adaptive Update Rule.}

AMDP employs the generalized adaptive update
\begin{equation}
	\theta_{t+1}
	=
	\theta_t
	-
	\eta P_t m_t,
\end{equation}
where:
\begin{itemize}
	\item $m_t$ is a momentum or gradient estimator,
	\item $P_t$ is a positive diagonal preconditioner,
	\item $\eta>0$ is the learning rate.
\end{itemize}

This formulation includes SGD, Momentum SGD, RMSProp, Adam, AdamW, and AMSGrad as special cases.

For vanilla SGD:
\begin{equation}
	m_t
	=
	g(\theta_{t-\tau(t)};\xi_t),
	\qquad
	P_t=I.
\end{equation}

For Adam-type optimizers:
\begin{align}
	m_t
	&=
	\beta_1 m_{t-1}
	+
	(1-\beta_1)g_t, \\
	v_t
	&=
	\beta_2 v_{t-1}
	+
	(1-\beta_2)g_t^2, \\
	P_t
	&=
	\mathrm{diag}
	\left(
	\frac{1}{\sqrt{v_t}+\epsilon}
	\right).
\end{align}

The logical delay $\tau(t)\in\{0,1\}$ denotes the number of parameter updates between the forward and backward passes associated with the same minibatch.

For comparison, define the corresponding synchronous adaptive optimizer:
\begin{equation}
	\theta_{t+1}^{\mathrm{sync}}
	=
	\theta_t^{\mathrm{sync}}
	-
	\eta
	P_t^{\mathrm{sync}}
	m_t^{\mathrm{sync}}.
\end{equation}

We further define the trajectory discrepancy:
\begin{equation}
	\Delta_t
	:=
	\theta_t
	-
	\theta_t^{\mathrm{sync}}.
\end{equation}

\subsection{Bounded Parameter Mismatch}
\label{appendix-B2}

We first show that AMDP guarantees a uniformly bounded parameter mismatch.

\begin{lemma}[Bounded Parameter Mismatch]
	\label{lem:mismatch}
	
	Let $d$ denote the pipeline depth and let $n$ denote the number of minibatches that stage~0 may inject before receiving the first backward signal. For any pipeline stage
	$i\in\{0,\dots,d-1\}$, the number of parameter updates between the forward and backward passes of the same minibatch at stage $i$ satisfies
	\begin{equation}
		\operatorname{mismatch}(i)
		=
		\min(n,d-i)-1.
	\end{equation}
	
	In AMDP, the scheduler enforces $n=2$.
	Consequently,
	\begin{equation}
		\operatorname{mismatch}(i)\le1,
		\qquad
		\forall i.
	\end{equation}
	
\end{lemma}

\begin{proof}
	
	Consider a minibatch $\mathbf{b}$ entering the pipeline.
	Assume its forward pass reaches stage $i$ at logical time $t$.
	
	Before the backward pass of $\mathbf{b}$ returns to stage $i$, the same stage may process forward passes of subsequent minibatches.
	The number of such additional forward executions is limited by two independent constraints:
	
	\begin{enumerate}
		
		\item
		(\textbf{Pipeline injection limit})
		Stage~0 injects at most $n$ minibatches before the first backward signal is received.
		Therefore, at most $n-1$ subsequent minibatches may enter the pipeline after $\mathbf{b}$.
		
		\item
		(\textbf{Pipeline propagation limit})
		The backward pass of $\mathbf{b}$ must traverse the downstream stages
		$d-1,d-2,\dots,i+1$
		before reaching stage $i$.
		During this interval, stage $i$ can process at most one forward pass per downstream stage.
		Hence, at most $d-i-1$ additional forward executions are possible.
		
	\end{enumerate}
	
	Combining both constraints yields
	\begin{equation}
		\operatorname{mismatch}(i)
		\le
		\min(n-1,d-i-1)
		=
		\min(n,d-i)-1.
	\end{equation}
	
	Under AMDP scheduling, $n=2$.
	Thus,
	\begin{equation}
		\operatorname{mismatch}(i)
		=
		\min(2,d-i)-1
		\le
		1.
	\end{equation}
	
\end{proof}

The significance of Lemma~\ref{lem:mismatch} is fundamental:
AMDP transforms pipeline parallelism into a bounded-delay stochastic optimization process with $\tau_{\max}=1$, which is substantially smaller than PipeDream-style training systems whose delay scales with pipeline depth.

\subsection{Near-Synchronous Convergence}
\label{appendix-B3}

We now establish the main convergence theorem.

\begin{theorem}[Near-Synchronous Convergence of AMDP]
	\label{thm:nearsync}
	
	Under Assumptions A1--A5 and the bounded-delay condition $\tau_{\max}=1$, the AMDP iterates generated by
		$\theta_{t+1}
		=
		\theta_t
		-
		\eta P_t m_t$
	satisfy
	\begin{equation}
		\frac{1}{T}
		\sum_{t=0}^{T-1}
		\mathbb{E}
		\left[
		\|\nabla F(\theta_t)\|^2
		\right]
		\le
		\frac{
			2(F(\theta_0)-F^\star)
		}{
			\eta T
		}
		+
		O(\eta\sigma^2)
		+
		O(\eta^2).
	\end{equation}
	where: $F^\star$ is a lower bound of $F$.

	In particular, AMDP preserves the convergence rate of synchronous adaptive optimization up to a second-order perturbation term $O(\eta^2)$.
	
\end{theorem}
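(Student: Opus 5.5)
The plan is to run the standard descent-lemma analysis for smooth nonconvex optimization directly on the AMDP iterates, and to treat the one-step staleness from Lemma~\ref{lem:mismatch} as a bounded perturbation of the synchronous update direction. I will not go through the trajectory discrepancy $\Delta_t$. Write the update as $\theta_{t+1}=\theta_t-\eta P_t m_t$. Let $\tilde m_t$ and $\tilde P_t$ be the momentum and preconditioner that would be built from the fresh gradients $g(\theta_k;\xi_k)$, $k\le t$, in place of the stale gradients $g(\theta_{k-\tau(k)};\xi_k)$.

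\textbf{Step 1: gradient-history perturbation.} Since $\tau(k)\le 1$, each stale argument differs from the fresh one by at most one step: $\|\theta_{k-1}-\theta_k\|=\eta\|P_{k-1}m_{k-1}\|\le \eta c_{\max}\|m_{k-1}\|$. Assumptions A3 and A5 control $\mathbb{E}\|m_{k-1}\|^2$ by $O(G^2)$. This step needs $m_t$ to be an average of past gradients, as in Adam, and needs $\Phi_t(0,\dots,0)=0$. Applying $L$-smoothness to the \emph{mean} gradient then gives a bias of $O(\eta)$ per entry of the history.

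A difficulty appears here. The stochastic gradient at a stale point and at a fresh point on the \emph{same} sample $\xi_k$ cannot be compared through A1, which only controls $\nabla F$. So I will split each gradient into its mean plus a noise term. The noise terms are martingale differences with variance $\le\sigma^2$, and only the mean parts are compared via A1. With A5 applied to the mean parts, this yields $\|\mathbb{E}_t m_t-\bar m_t\|\le L_\Phi L c_{\max}\eta\, O(G)$ and $\|P_t-\bar P_t\|=O(\eta)$, where the bars denote the corresponding deterministic-mean quantities.

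\textbf{Step 2: descent inequality.} By A1,
\[
F(\theta_{t+1})\le F(\theta_t)-\eta\langle\nabla F(\theta_t),P_tm_t\rangle+\tfrac{L\eta^2}{2}\|P_tm_t\|^2 .
\]
I split $P_tm_t$ into the ``synchronous'' direction plus an error $e_t$ whose conditional-mean part has norm $O(\eta)$ by Step 1. The inner product with the synchronous direction is handled exactly as in the standard synchronous adaptive analysis. Using A4 and the usual treatment of momentum correlation, as in analyses of Adam/AMSGrad with fixed $\beta_1$, this term yields at least $\tfrac{\eta}{2}\|\nabla F(\theta_t)\|^2$ after normalizing constants. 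The cross term is bounded by Young's inequality: $\eta\|\nabla F\|\,\|e_t\|\le \tfrac{\eta}{4}\|\nabla F\|^2+\eta\|e_t\|^2$, and $\eta\|e_t\|^2$ contributes $O(\eta^3)$. The quadratic term contributes $\tfrac{L\eta^2}{2}c_{\max}^2(\sigma^2+\|\nabla F\|^2)$, and with $\eta L$ small the $\|\nabla F\|^2$ part is absorbed.

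\textbf{Step 3: telescoping.} Summing over $t=0,\dots,T-1$, using $F\ge F^\star$, dividing by $\eta T/2$ and taking expectations produces $\tfrac{2(F(\theta_0)-F^\star)}{\eta T}+O(\eta\sigma^2)+O(\eta^2)$, which is the claimed bound.

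\textbf{Main obstacle.} The hardest step is Step 2's synchronous-direction lower bound for genuinely adaptive $P_t$ and momentum $m_t$. Two effects make it delicate. First, $P_t$ is correlated with the current gradient. Second, momentum mixes gradients from earlier iterates, so $\langle\nabla F(\theta_t),P_tm_t\rangle$ is not directly $\ge c_{\min}\|\nabla F\|^2$.

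I would first handle the correlation by replacing $P_t$ with $P_{t-1}$, which is measurable before $\xi_t$. A4 and A5 bound the replacement error by $O(\eta)$ per step, as in AMSGrad-style proofs. For the momentum mixing, I would use the auxiliary sequence $z_t=\theta_t+\tfrac{\beta_1}{1-\beta_1}(\theta_t-\theta_{t-1})$ and run the descent lemma on $F(z_t)$. The staleness error from Step 1 then enters only additively at order $\eta^2$ per step, and after division by $\eta$ it appears as $O(\eta^2)$ in the average.

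If this proves too involved, I would state the bound with constants depending on $c_{\min},c_{\max},L_\Phi,L_\Psi,G$ hidden in the $O(\cdot)$. I would also prove the SGD case ($P_t=I$, $m_t=g$) in full as the template.
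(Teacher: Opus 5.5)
Your route differs from the paper's. The paper couples AMDP to a synchronous run through $\Delta_t=\theta_t-\theta_t^{\mathrm{sync}}$ and closes $\mathbb{E}\|\Delta_{t+1}\|^2\le(1+C\eta)\,\mathbb{E}\|\Delta_t\|^2+C\eta^4$ by discrete Gr\"onwall. It then appeals to a synchronous adaptive guarantee as a black box. You instead run the descent lemma on the AMDP iterates themselves, so staleness enters only as an additive per-step error. This buys something real. The paper's recursion yields only $\mathbb{E}\|\Delta_t\|^2\le\eta^3(e^{C\eta t}-1)$, which is $O(\eta^2)$ only while $\eta t\lesssim\log(1/\eta)$. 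That excludes the regime $\eta T\to\infty$ (e.g.\ $\eta=T^{-1/2}$), which is exactly where the first term is small. For $P_t\equiv I$ (SGD, EMA momentum) your plan goes through, most cleanly via your $z_t$. One gets exactly $z_{t+1}-z_t=-\eta\,g(\theta_{t-\tau(t)};\xi_t)$, whose conditional mean is $\nabla F(\theta_{t-\tau(t)})$, and $\|z_t-\theta_{t-\tau(t)}\|=O(\eta)$ in $L^2$ by A3. So only mean gradients are compared through A1, staleness costs $O(\eta^3)$ per step, and moving back from $z_t$ to $\theta_t$ costs $O(\eta^2)$. The leading constant comes out larger than $2$ and depends on $\beta_1$; the paper's $2$ is itself only the plain-SGD constant.

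The genuine gap is Step~2 for adaptive $P_t$, and it fails in two places. First, your Young step needs the \emph{realized} $\|e_t\|=O(\eta)$, while Step~1 at best controls conditional means. With EMA weights $w_{t,k}$, $e_t$ contains $P_t\sum_{k<t}w_{t,k}\bigl[g(\theta_{k-\tau(k)};\xi_k)-g(\theta_k;\xi_k)\bigr]$ over already-realized samples, plus $(P_t-\tilde P_t)\tilde m_t$ with $\Psi_t$ nonlinear. A1--A2 do not tie together the noise parts of stale and fresh gradients on the same $\xi_k$, so A5 does not give $O(\eta)$. Splitting into mean plus noise does not help either: expectation does not pass through $\Psi_t$, and past noise is not conditionally centered. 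As written, $\eta\|e_t\|^2$ is $O(\eta\sigma^2)$ per step, which leaves an $O(\sigma^2)$ floor after dividing by $\eta$. You need mean-squared smoothness, $\mathbb{E}_\xi\|g(x;\xi)-g(y;\xi)\|^2\le\bar L^2\|x-y\|^2$; the paper also uses this tacitly when it applies A5 to stochastic gradients. Second, replacing $P_t$ by $P_{t-1}$ does not cost $O(\eta)$. A5 compares the same map $\Psi_t$ on two histories and says nothing about $\Psi_t$ versus $\Psi_{t-1}$, and for Adam $\|P_t-P_{t-1}\|$ scales with $1-\beta_2$, not $\eta$. Hence the correlation term $\eta\,\mathbb{E}\langle\nabla F(\theta_t),(P_t-P_{t-1})g_t\rangle$ contributes $O(1-\beta_2)$ to the averaged bound. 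So does the extra term $\tfrac{\beta_1}{1-\beta_1}\eta(P_{t-1}-P_t)m_{t-1}$ that enters $z_{t+1}-z_t$ once $P_t$ varies, and neither vanishes with $\eta$. AMSGrad-style proofs tame these terms through monotonicity of $P_t$, which makes $\sum_t\|P_{t-1}-P_t\|_1$ telescope; A4--A5 give no such property. Hiding constants in $O(\cdot)$, as in your fallback, does not help, since these terms do not scale with $\eta$. So the claimed lower bound $\tfrac{\eta}{2}\|\nabla F(\theta_t)\|^2$ is unsupported for Adam-type $P_t$. You need an extra hypothesis, such as monotone $P_t$ or $1-\beta_2=O(\eta)$. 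Alternatively, you could explicitly assume a synchronous guarantee in a per-step form that is robust to perturbing the iterates; that is exactly the step the paper outsources.
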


\begin{proof}
	
	Define the discrepancy between AMDP and synchronous trajectories:
	\begin{equation}
		\Delta_t
		=
		\theta_t
		-
		\theta_t^{\mathrm{sync}}.
	\end{equation}
	
	Since AMDP guarantees
	$\tau(t)\le1$,
	the stale gradient is evaluated at parameters differing by at most one optimization step:
	\begin{align}
		\|\theta_t-\theta_{t-\tau(t)}\|
		&=
		\left\|
		\sum_{k=t-\tau(t)}^{t-1}
		(\theta_{k+1}-\theta_k)
		\right\| \nonumber \\
		&\le
		\eta c_{\max}
		\sum_{k=t-\tau(t)}^{t-1}
		\|m_k\|.
	\end{align}
	
	Using Assumptions A3--A4,
	\begin{equation}
		\mathbb{E}
		\left[
		\|\theta_t-\theta_{t-\tau(t)}\|^2
		\right]
		=
		O(\eta^2).
	\end{equation}
	
	By $L$-smoothness,
	\begin{align}
		\|
		\nabla F(\theta_t)
		-
		\nabla F(\theta_{t-\tau(t)})
		\|
		&\le
		L
		\|
		\theta_t-\theta_{t-\tau(t)}
		\| \nonumber \\
		&=
		O(\eta).
	\end{align}
	
	From Assumption A5,
	the perturbations induced in the momentum estimator and adaptive preconditioner satisfy
	\begin{align}
		\|m_t-m_t^{\mathrm{sync}}\|
		&=
		O(\eta), \\
		\|P_t-P_t^{\mathrm{sync}}\|
		&=
		O(\eta).
	\end{align}
	
	Subtracting the synchronous update from the AMDP update yields
	\begin{align}
		\Delta_{t+1}
		&=
		\Delta_t
		-
		\eta
		\left(
		P_t m_t
		-
		P_t^{\mathrm{sync}}
		m_t^{\mathrm{sync}}
		\right) \\
		&=
		\Delta_t
		-
		\eta
		(P_t-P_t^{\mathrm{sync}})m_t
		-
		\eta
		P_t^{\mathrm{sync}}
		(m_t-m_t^{\mathrm{sync}}).
	\end{align}
	
	Taking norms and expectations,
	and using Assumptions A3--A5,
	there exists a constant $C>0$ such that
	\begin{equation}
		\mathbb{E}
		\left[
		\|\Delta_{t+1}\|^2
		\right]
		\le
		(1+C\eta)
		\mathbb{E}
		\left[
		\|\Delta_t\|^2
		\right]
		+
		C\eta^4.
	\end{equation}
	
	Applying the discrete Gr\"onwall inequality gives
	\begin{equation}
		\mathbb{E}
		\left[
		\|\Delta_t\|^2
		\right]
		=
		O(\eta^2).
	\end{equation}
	
	Again using $L$-smoothness,
	\begin{align}
		\mathbb{E}
		\left[
		\|
		\nabla F(\theta_t)
		-
		\nabla F(\theta_t^{\mathrm{sync}})
		\|^2
		\right]
		&\le
		L^2
		\mathbb{E}
		\left[
		\|\Delta_t\|^2
		\right] \nonumber \\
		&=
		O(\eta^2).
	\end{align}
	
	Finally, combining this perturbation estimate with the standard convergence guarantee for synchronous adaptive optimization yields
	\begin{equation}
		\frac{1}{T}
		\sum_{t=0}^{T-1}
		\mathbb{E}
		\left[
		\|\nabla F(\theta_t)\|^2
		\right]
		\le
		\frac{
			2(F(\theta_0)-F^\star)
		}{
			\eta T
		}
		+
		O(\eta\sigma^2)
		+
		O(\eta^2).
	\end{equation}
	
	This completes the proof.
\end{proof}

\paragraph{Discussion.}

Choosing $\eta=\Theta(T^{-1/2})$ yields the standard nonconvex stochastic optimization rate $O(T^{-1/2})$. Importantly, the additional perturbation introduced by AMDP scales only as $O(\eta^2)$, which is asymptotically dominated by the intrinsic stochastic optimization error.
This explains why AMDP empirically exhibits convergence behavior nearly identical to synchronous optimization even when combined with adaptive optimizers such as AdamW.

\subsection{Invariance of the Mismatch Bound}
\label{appendix-B4}

We finally show that the mismatch bound is invariant to deployment topology and scheduling permutation.

\begin{proposition}[Topology-Independent Mismatch Bound]
	\label{prop:invariance}
	
	Assume that AMDP preserves causal execution order through FIFO scheduling and does not permit logical reordering of updates. Then the mismatch bound in Lemma~\ref{lem:mismatch} depends only on the pipeline injection limit $n$ and the logical stage index $i$. Consequently, when $n=2$, $\operatorname{mismatch}(i)\le1$ remains invariant under:
	\begin{itemize}
		\item arbitrary pipeline depth $d$,
		\item arbitrary mapping between logical stages and physical GPUs,
		\item multi-node deployment with bounded communication latency,
		\item arbitrary directional scheduling permutations.
	\end{itemize}
\end{proposition}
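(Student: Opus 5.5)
The plan is to show that the counting argument behind Lemma~\ref{lem:mismatch} only uses \emph{logical} quantities: the causal (happens-before) order of forward and backward operations on a fixed logical stage $i$, and the injection limit $n$ at stage~0. Physical placement, link latency and pipeline direction only shift \emph{when} operations happen in wall-clock time, never \emph{which} operations can precede which. So the proof will first rephrase mismatch purely in terms of the causal order, and then argue that each listed transformation preserves that order.

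\textbf{Step 1 (logical reformulation).} Fix a pipeline and a minibatch $\mathbf{b}$. Define $\mathrm{mismatch}(i)$ as the number of parameter updates of logical stage $i$ that occur between the forward and backward passes of $\mathbf{b}$ at stage $i$. Under the FIFO / no-reordering assumption, these updates are in bijection with the backward passes at stage $i$ of minibatches injected before $\mathbf{b}$. Those backward passes are causally ordered after $F_i(\mathbf{b})$ and before $B_i(\mathbf{b})$. The injection limit caps the number of minibatches in flight between $\mathbf{b}$'s forward and backward at $n$, including $\mathbf{b}$. This gives $\mathrm{mismatch}(i)\le n-1$ as a statement about the dependency DAG alone.

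\textbf{Step 2 (the propagation limit).} The argument that stage $i$ does at most one extra forward per downstream stage also uses only DAG edges: $F_i\to F_{i+1}$, $B_{i+1}\to B_i$, and per-stage sequential execution. This yields $\min(n,d-i)-1$, and hence $\le1$ when $n=2$, for every $d$. That settles the invariance in $d$.

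\textbf{Step 3 (each item is a DAG isomorphism).} Argue item by item.
\begin{itemize}
\item A stage-to-GPU mapping, including the Chimera-style map $(2j\pm i)\bmod d$, relabels vertices but keeps the edges indexed by logical stage.
\item Multi-node deployment with bounded latency adds delays to edges but no new edges or reorderings.
\item Directional permutations are again relabelings.
\end{itemize}
Since mismatch is a function of the labeled DAG modulo these relabelings (Step 1), it is unchanged.

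\textbf{Main obstacle.} The subtle point is colocation: one GPU hosts several logical stages from different pipelines, and FIFO conflict resolution can defer operations. I must check that deferring only delays operations and never lets stage $i$ of one pipeline inject an extra forward before its own backward. The injection counter is per pipeline and enforced at stage~0, so cross-pipeline interleaving cannot add updates of a given pipeline's stage beyond $n-1$. There is a further caveat: bubble-filling preloads, which the paper itself noted create 2-step mismatch, are excluded only because updates are applied at accumulation boundaries. I would therefore state that the proposition holds with updates counted per accumulation window, as in Section~\ref{sec3.3}.
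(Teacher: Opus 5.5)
Your proposal is correct at the paper's level of rigor and follows essentially the same route. The paper also models execution as logical events $(i,j,\mathrm{type})$ constrained by $\mathrm{forward}(i,j)\prec\mathrm{backward}(i,j)$, forward propagation $i\to i+1$, backward propagation $i\to i-1$, and FIFO order; it then argues that GPU placement, topology, multi-node latency and pipeline direction change only wall-clock timing, not this partial order, so the count between $\mathrm{forward}(i,j)$ and $\mathrm{backward}(i,j)$ is fixed by $n$ and $d-i$, and your extra discussion of colocated stages and bubble-filling preloads refines rather than changes that argument.
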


\begin{proof}
	
	We formalize execution using a logical event sequence.
	Each event is represented as $(i,j,\mathrm{type})$,
	where $i$ is the stage index, $j$ is the minibatch index, and $\mathrm{type}\in\{\mathrm{forward},\mathrm{backward}\}$. AMDP enforces the following causal constraints:
	
	\begin{enumerate}
		
		\item
		(\textbf{Forward-backward dependency})
		\begin{equation}
			\mathrm{forward}(i,j)
			\prec
			\mathrm{backward}(i,j).
		\end{equation}
		
		\item
		(\textbf{Pipeline forward dependency})
		For $i<d-1$,
		\begin{equation}
			\mathrm{forward}(i,j)
			\prec
			\mathrm{forward}(i+1,j).
		\end{equation}
		
		\item
		(\textbf{Pipeline backward dependency})
		For $i>0$,
		\begin{equation}
			\mathrm{backward}(i,j)
			\prec
			\mathrm{backward}(i-1,j).
		\end{equation}
		
	\end{enumerate}
	
	FIFO scheduling further guarantees that the processing order of minibatches is preserved at every stage. Crucially, none of the above constraints depend on physical GPU placement, network topology, communication direction,and multi-node deployment layout.These implementation details may alter wall-clock timing but cannot alter the logical partial order of events. Therefore, the number of forward executions that stage $i$ may perform between
	$\mathrm{forward}(i,j)$
	and
	$\mathrm{backward}(i,j)$
	remains determined solely by:
	\begin{itemize}
		\item the maximum number of injected minibatches $n$,
		\item the remaining logical pipeline distance $d-i$.
	\end{itemize}
	
	Consequently, $\operatorname{mismatch}(i)=\min(n,d-i)-1$
	is invariant under all deployment and scheduling permutations satisfying the AMDP causal constraints. In particular, when $n=2$, $\operatorname{mismatch}(i)\le1$.
	
\end{proof}

\section{Experiments}
\label{appendix-C}

\subsection{Comparison with recent work}
\label{appendix-C1}

Because the author-released implementation of vNAG \citep{ajanthannesterov} does not run under our model configurations, we adapt both AMDP and synchronous pipeline approach (i.e., 1F1B implementation in Megatron-LM) to match the experimental setup used in vNAG. The configurations are detailed below:
\begin{itemize}
	\item Model-1. A GPT-style architecture with a context length of 512, an embedding dimension of 768, 12 attention heads, and 8 transformer layers (approximately 134M parameters), where each layer is treated as one pipeline stage. The minibatch size is set to 8, with a learning rate of 3e-4 and a weight decay of 0.01.
	\item Model-2. A larger GPT variant that preserves the total number of stages (8) but increases the context length to 1024 and the embedding dimension to 2688, with 24 attention heads (approximately 1B parameters). A learning rate of 1e-4 is applied to all methods for this model.
\end{itemize}

\begin{table}[tb]
	\setlength{\tabcolsep}{1.0mm}
	\caption{Comparison with vNAG. The best and second-best results are \textbf{bolded} and \underline{underlined}. $d$ denotes the pipeline depth and $b$ indicates the number of samples processed per update.}
	\label{tab-add-th}
	\centering
	{\small
		\begin{tabular}{lcc|ccc|ccc}
			\toprule
			\multirow{2}{*}{Model} & \multirow{2}{*}{$d$} & \multirow{2}{*}{$b$} 
			& \multicolumn{3}{c|}{Throughput (ktokens/s)} 
			& \multicolumn{3}{c}{Loss (over 50k iterations)} \\
			& & & Sync. PP & vNAG & AMDP & Sync. PP & vNAG & AMDP \\
			\midrule
			Model-1 & 8 & 8 & 31.0 & \underline{47.1} & \textbf{58.5} & \textbf{4.05} & 5.12 & \underline{4.09} \\
			Model-2 & 8 & 8 & 8.9  & \underline{11.1} & \textbf{18.2} & \textbf{3.67} & 4.93 & \underline{3.69} \\
			\bottomrule
		\end{tabular}
	}
\end{table}

We train both models from scratch for 50k iterations on the OpenWebText dataset. All experiments are performed on the sever equipped with 8 A800 GPUs. Table~\ref{tab-add-th} reports throughput as well as the final training loss. Figure \ref{fig-adloss} presents loss trajectories for the experiments. These results highlight two consistent trends:
(1) AMDP delivers the highest throughput across both model scales, surpassing vNAG by 24.2\% on Model-1 and by 64.0\% on Model-2.
(2) AMDP tracks the convergence behavior of synchronous baseline much more closely than vNAG, which shows a pronounced deviation. These findings reinforce the central advantage of AMDP, namely its structural one-step mismatch bound, which ensures both efficient utilization and stable optimization dynamics.

\begin{figure*}[tb]
	\centering
	\begin{subfigure}[t]{0.43\columnwidth}
	\includegraphics[width=\columnwidth]{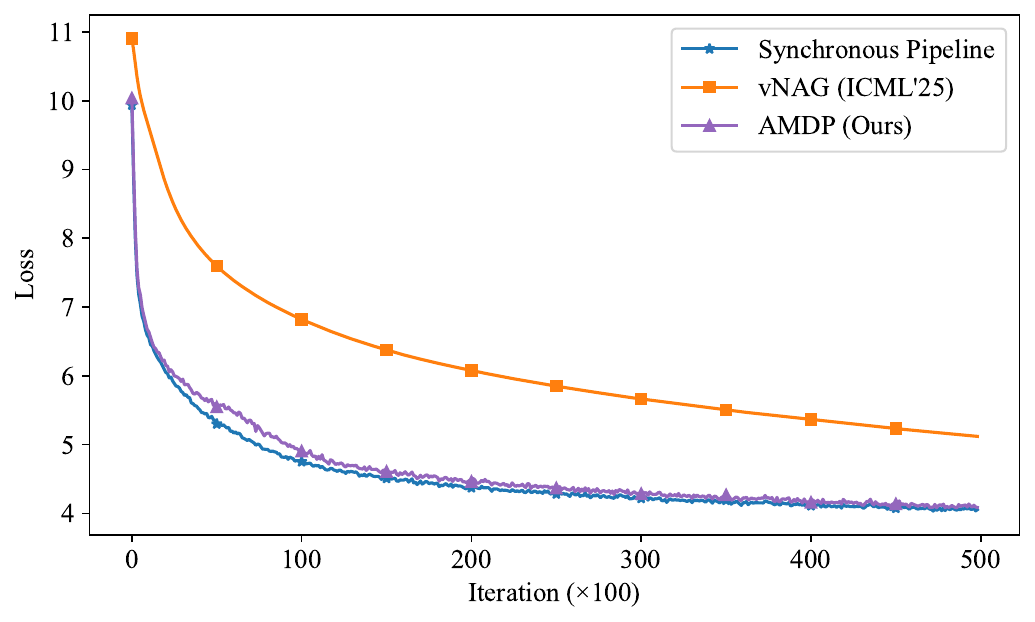}
	\caption{Loss versus iterations (Model-1)}
	\label{fig-adloss1}
	\end{subfigure}
	\begin{subfigure}[t]{0.43\columnwidth}
		\includegraphics[width=\columnwidth]{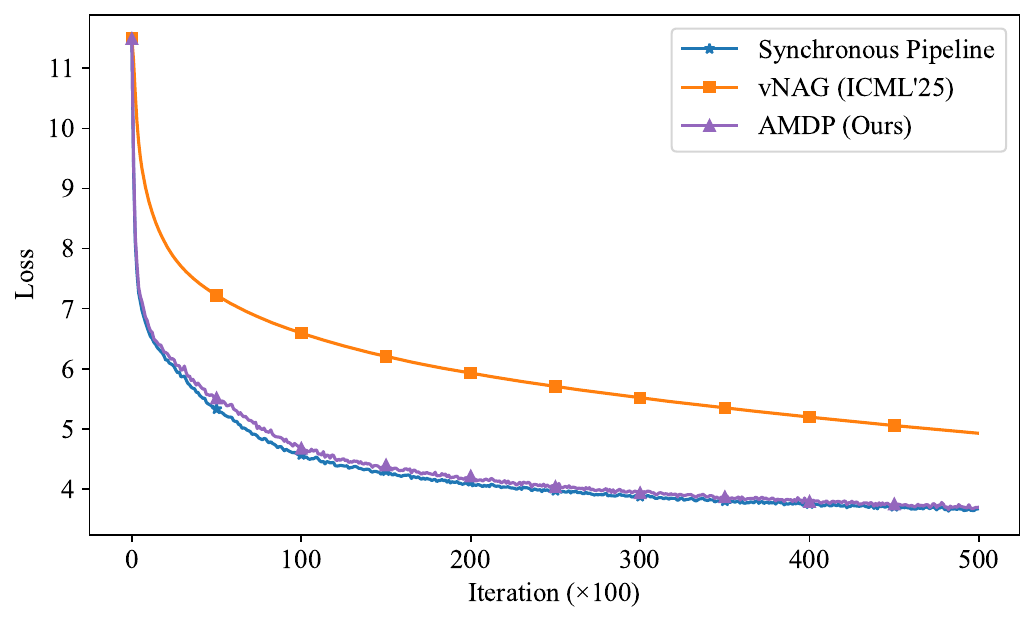}
		\caption{Loss versus iterations (Model-2)}
		\label{fig-adloss2}
	\end{subfigure}
	\caption{Training loss.}
	\label{fig-adloss}
\end{figure*}


\end{document}